\DeclareMathOperator*{\argmax}{arg\,max}
\newtheorem{theorem}{Theorem}
\newtheorem{proposition}[theorem]{Proposition}
\newtheorem{corollary}[theorem]{Corollary}
\newtheorem{lemma}[theorem]{Lemma}
  \def\marginset#1#2{                      
  \setlength{\oddsidemargin}{#1}
    \setlength{\evensidemargin}{0mm}

  \setlength{\hoffset}{\paperwidth}
  \addtolength{\hoffset}{-\oddsidemargin}
  \addtolength{\hoffset}{-\textwidth}
  \addtolength{\hoffset}{-\evensidemargin}
  \setlength{\hoffset}{0.5\hoffset}
  \addtolength{\hoffset}{-1in}           

    \setlength{\voffset}{-1in}             
  \setlength{\topmargin}{\paperheight}
  \addtolength{\topmargin}{-\headheight}
  \addtolength{\topmargin}{-\headsep}
  \addtolength{\topmargin}{-\textheight}
  \addtolength{\topmargin}{-\footskip}
  \addtolength{\topmargin}{#2}
  \setlength{\topmargin}{0.4\topmargin} }
\def\@maketitle{%
 \newpage
 \begin{center}%
 \let \footnote \thanks
   {\LARGE \@title \par}%
   \vskip 1em%
   {\large
     \lineskip .5em%
     \begin{tabular}[t]{c}%
       \@author
     \end{tabular}\par}%
 \end{center}%
 \par
 \vskip 1.5em}
\begin{document}

\title{Scheduling with Rate Adaptation under Incomplete Knowledge of Channel/Estimator Statistics}
\author{\emph{Wenzhuo Ouyang, Sugumar Murugesan, Atilla Eryilmaz, Ness B.
Shroff}\\
\{ouyangw, murugess, eryilmaz, shroff\}@ece.osu.edu\\
Department of Electrical and Computer Engineering\\
The Ohio State University\\
Columbus, OH, 43210 } \maketitle

\begin{abstract}
In time-varying wireless networks, the states of the communication
channels are subject to random variations, and hence need to be
estimated for efficient rate adaptation and scheduling. The
estimation mechanism possesses inaccuracies that need to be tackled
in a probabilistic framework. In this work, we study scheduling with
rate adaptation in single-hop queueing networks under two levels of
channel uncertainty: when the channel estimates are inaccurate but
complete knowledge of the channel/estimator joint statistics is
available at the scheduler; and when the knowledge of the joint
statistics is incomplete. In the former case, we characterize the
network stability region and show that a maximum-weight type
scheduling policy is throughput-optimal. In the latter case, we
propose a joint channel statistics learning - scheduling policy. With an
associated trade-off in average packet delay and convergence time,
the proposed policy has a stability region arbitrarily close to the
stability region of the network under full knowledge of
channel/estimator joint statistics.
 \end{abstract}

\section{Introduction}

Scheduling in wireless networks is a critical component of resource
allocation that aims to maximize the overall network utility subject
to link interference and queue stability constraints. Since the
seminal paper by Tassiulas and Ephremides (\cite{backpressure}),
maximum-weight type algorithms have been intensely studied
(e.g., \cite{MWM}-\cite{Neely05}) and found to be throughput-optimal in
various network settings. The majority of existing works employing
maximum-weight type schedulers are based on the assumption that full
knowledge of channel state information (CSI) is available at the
scheduler. In realistic scenarios, however, due to random variations
in the channel, full CSI is rarely, if ever, available at the
scheduler. The dynamics of the scheduling problem with imperfect CSI
is, therefore, vastly different from the problem with full CSI
in the following two ways (1) a non-trivial amount of
network resource, that could otherwise be used for data
transmission, is spent in learning the channel; (2) the acquired
information on the channel is potentially inaccurate, essentially
underscoring the need for intelligent rate adaptation and user
scheduling. Realistic networks are thus characterized by a convolved
interplay between channel estimation, rate adaptation, and multiuser
scheduling mechanisms.

These complicated dynamics are studied under various network
settings in recent works (\cite{infreq}-\cite{2stage}). In
\cite{infreq}, the authors study scheduling in single-hop wireless
networks with Markov-modeled binary ON-OFF channels. Here scheduling
decisions are made based on cost-free estimates of the channel
obtained once every few slots. The authors show that a
maximum-weight type scheduling policy, that takes into account the
probabilistic inaccuracy in the channel estimates and the memory in
the Markovian channel, is throughput-optimal. In \cite{Topology},
the authors study decentralized scheduling under partial CSI in
multi-hop wireless networks with Markov-modeled channels. Here, each
user knows its channel perfectly and has access to delayed CSI of
other users' channels. The authors characterize the stability region
of the network and show that a maximum-weight type threshold policy,
implemented in a decentralized fashion at each user, is throughput
optimal.

In \cite{Neely_CDC}, the authors study scheduling under imperfect
CSI in single-hop networks with independent and identically
distributed (\textit{i.i.d.}) channels. They consider a two-stage
decision setup: in the first stage, the scheduler decides whether to
estimate the channel with a corresponding energy cost; in the second
stage, scheduling with rate adaptation is performed based on the
outcome of the first stage. Under this setup, the authors propose a
maximum-weight type scheduling policy that minimizes the energy
consumption subject to queue stability.

While studying scheduling under imperfect CSI is a first step in the
right direction, these works assume that complete knowledge of the
channel/estimator joint statistics, which is crucial for the success
of opportunistic scheduling, is readily available at the scheduler.
This is another simplifying assumption that need not always hold in
reality. Taking note of this, we study scheduling in single-hop
networks under imperfect CSI, and when the knowledge of the
channel/estimator joint statistics is incomplete at the scheduler.
We propose a joint statistics learning-scheduling policy that
allocates a fraction of the time slots (the exploration slots) to
continuously learn the channel/estimator statistics, which in turn
is used for scheduling and rate adaptation during data transmission
slots. Note that our setup is similar to the setup considered in
\cite{2stage}. Here the author considers a two-stage decision setup.
When applied to the scheduling problem, this work can be interpreted
as follows. One of $K$ estimators is chosen to estimate the channel
in the first stage, with unknown channel/estimator joint statistics.
The second stage decision is made to minimize a \textit{known}
function of the estimate obtained in the first stage. Our problem is
different from this setup in that the channel/estimator joint
statistics is important to optimize the second stage decision in
our problem - i.e., scheduling with rate adaptation. This is not the
case in \cite{2stage} where a \textit{known} function of the
estimate is optimized and the channel/estimator joint statistics is
helpful only in the first stage that decides one of $K$ estimators.
Our contribution is two-fold:
\begin{itemize}
\item
When complete knowledge of the channel/estimator joint statistics is
available at the scheduler, we characterize the network stability
region and show that a simple maximum-weight type scheduling policy
is throughput-optimal. It is worth contrasting this result with
those in \cite{infreq}-\cite{Neely_CDC}. In these works,
imperfection of CSI is assumed to be caused by \textit{specific}
factors like delayed channel feedback, infrequent channel
measurement, etc, whereas, in our model, since the channel/estimator
joint statistics is unconstrained, the CSI inaccuracy is captured
in a more general probabilistic framework.

\item Using the preceding system level results as a benchmark, we study scheduling
under incomplete knowledge of the channel/estimator joint
statistics. We propose a scheduling policy with an in-built
statistics learning mechanism and show that, with a corresponding
trade-off in the average packet delay before convergence, the
stability region of the proposed policy can be pushed arbitrarily
close to the network stability region under full knowledge of
channel/estimator statistics.
\end{itemize}

The paper is organized as follows. Section II formalizes the system
model. In Section III, we characterize the stability region of the
network and propose a throughput-optimal scheduling policy. In
Section IV, we study joint statistics learning-scheduling and rate
adaptation when the scheduler has incomplete knowledge of
channel/estimator statistics. Concluding remarks are provided in
Section V.

\section{System Model}

We consider a wireless downlink communication scenario with one base
station and $N$ mobile users. Data packets to be transmitted from
the base station to the users are stored in $N$ separate queues at
the base station. Time is slotted with the slots of all the users
synchronized. The channel between the base station and each user is
\textit{i.i.d.} across time slots and independent across users. We do
not assign any specific distribution to the channels throughout this
work. The channel state of a user in a slot denotes the number of
packets that can be successfully transmitted without outage to that
user, in that slot. Transmission at a rate below the channel state
always succeeds, while transmission at a rate above the channel
state always fails. We assume the channel state lies in a finite
discrete state space $\mathcal{S}$. Let $C_i[t]$ be the random
variable denoting the channel state of user $i$ in slot $t$. The
channel state of the network in slot $t$ is denoted by the vector
$\bm C[t]=\big [C_1[t], C_2[t], \cdots, C_N[t] \big]\in
\mathcal{S}^N$. In each slot, the scheduler has access to estimates
of the channel states, i.e., $\widehat{\bm C}[t]=\big
[\widehat{C}_1[t], \widehat{C}_2[t], \cdots, \widehat{C}_N[t]
\big]\in \mathcal{S}^N$. The estimator is fixed for each user and
the estimates are independent across users. The channel/estimator joint statistics for user $i$ is given by the
$|\mathcal{S}|^2$ probabilities $P(C_i{=}c_i,
\widehat{C}_i{=}\hat{c}_i)$, $\forall c_i{\in}\mathcal{S},
\hat{c}_i{\in}\mathcal{S}$.

We adopt the one-hop interference model, where, in each slot, only
one user is scheduled for data transmission. The scheduler (base
station), based on the channel estimate and the queue length
information, decides which user to schedule and performs rate
adaptation in order to maximize the overall network stability
region. Let $I[t]$ and $R[t]$ denote the index of the user scheduled
to transmit and the corresponding rate of transmission,
respectively, at slot $t$. Due to potential mismatch between the
channel estimates and the actual channels, it is possible that the
allocated rate is larger than the actual channel rate, thus leading
to outage. In this case, the packet is retained at the head of the
queue and a retransmission will be attempted later. Let $Q_i[t]$
denote the state (length) of queue $i$ at the beginning of slot $t$.
Let $A_i[t]$ denote the number of exogenous packet arrivals at queue
$i$ at the beginning of slot $t$ with $E[A_i[t]]=\lambda_i$. The
queue state evolution can now be written as a discrete stochastic
process:

\vspace{-11pt} {\small
\begin{equation}
Q_i[t{+}1]=\big[ Q_i[t]{-}\bm 1 (I[t]{=}i) R[t] \cdot \bm 1(R[t]
{\leq} C_i[t])\big]^+ {+} A_i [t],
\end{equation}
} \hspace{-5pt} where $[\cdot ]^+= \max \{ 0, \cdot \}$.
\vspace{2pt} We adopt the following definition of queue stability
\cite{MWM}: Queue $i$ is stable if there exists a limiting
stationary distribution $F_i$ such that $\lim_{t\rightarrow \infty}
P( Q_i[t] \leq q)= F_i(q)$.

\vspace{-8pt}
\section{Full Knowledge of Channel/Estimator Joint Statistics}

In this section, we consider the scenario where the scheduler has
full access to the channel/estimator joint statistics, i.e.,
$P(C_i{=}c_i, \widehat{C}_i{=}\hat{c}_i)$, $\forall c_i{\in}\mathcal{S},
\hat{c}_i{\in}\mathcal{S}$ for $i{\in}\{1,{\ldots},N\}$. We
characterize the network stability region next. \vspace{-14pt}

\subsection{Network Stability Region}

Consider the class of stationary scheduling policies $G$ that base
their decision on the current queue length information $[Q_1, \ldots
, Q_N ]$, the channel estimates $[\widehat{C}_i,\ldots,
\widehat{C}_N]$, and full knowledge of channel/estimator joint
statistics. Define the network stability region as the closure of
the arrival rates that can be supported by the policies in $G$
without leading to system instability. Let $P_{\widehat{ \bm
C}}(\hat{\bm c}=[\hat{c}_1,\ldots,\hat{c}_N])$ denote the
probability of the channel estimate vector. Thus, \vspace{-6pt}
\begin{equation}
P_{\widehat{ \bm C}}(\hat{\bm
c}=[\hat{c}_1,\ldots,\hat{c}_N])=\prod_{i=1}^N
P(\widehat{C}_i=\hat{c}_i),
\end{equation}
where the probabilities $P(\widehat{C}_i=\hat{c}_i)$ are evaluated
from the knowledge of the channel/estimator joint statistics.
Defining $\mathcal{CH}[\mathcal{A}]$ as the convex hull (\cite{Boyd})
of set $\mathcal{A}$ and $\vec{\bm 1}_i$ as the $i^\textrm{th}$
coordinate vector, we record our result on the network stability
region below.
\begin{proposition}
\label{stability region chara} The stability region of the network
is given by
\begin{eqnarray}
\bm \Lambda = \sum_{\hat{\bm c}\in \mathcal{S}^N} P_{\widehat{ \bm
C}}(\hat{\bm c}) \cdot \mathcal{CH}\Big[\ \bm 0, P(C_i{\geq} r_i^*(\hat{c}_i)  \big
|\widehat{C}_i{=}\hat{c}_i\ ) r_i^*(\hat{c}_i) \cdot \vec{\bm 1}_i;
i=1,\cdots, N\Big],\nonumber
\end{eqnarray}
\hspace{-8pt} where $r_{i}^*(\hat{c}_i)= \argmax_{r\in \mathcal{S}}
\big \{P(C_i\geq r \Big |\widehat{C}_i=\hat{c}_i\ ) \cdot r \big \}$
and the conditional probabilities $P(C_i\geq
r_i^*(\hat{c}_i)|\widehat{C}_i=\hat{c}_i)$ are evaluated from the
knowledge of the channel/estimator joint statistics.
\end{proposition}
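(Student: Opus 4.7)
My plan is to prove Proposition \ref{stability region chara} by establishing the two standard inclusions: (a) every rate vector strictly interior to $\bm\Lambda$ can be supported by some policy in $G$ (achievability); and (b) every $\bm\lambda$ stabilizable by some policy in $G$ must lie in $\bm\Lambda$ (necessity). The form of $\bm\Lambda$ as an expectation over $\hat{\bm c}$ of a per-estimate convex hull suggests that the extremal policies to consider are stationary randomized policies depending only on the current channel estimate.

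For achievability, I would consider the family of stationary randomized policies indexed by nonnegative coefficients $\{\alpha_i(\hat{\bm c})\}_{i=0,\ldots,N}$ with $\sum_{i=0}^{N}\alpha_i(\hat{\bm c})=1$ for every $\hat{\bm c}\in\mathcal{S}^N$: in slot $t$ with estimate $\hat{\bm c}$, the policy picks user $i\geq 1$ with probability $\alpha_i(\hat{\bm c})$ and transmits at the throughput-optimal rate $r_i^*(\hat{c}_i)$, with $i=0$ corresponding to idling. The induced mean service rates are
\[
\mu_i=\sum_{\hat{\bm c}\in\mathcal{S}^N}P_{\widehat{\bm C}}(\hat{\bm c})\,\alpha_i(\hat{\bm c})\,P\bigl(C_i\geq r_i^*(\hat{c}_i)\,\big|\,\widehat{C}_i=\hat{c}_i\bigr)\,r_i^*(\hat{c}_i),
\]
and as the $\alpha_i(\hat{\bm c})$ sweep their simplex for each $\hat{\bm c}$, the vector $(\mu_1,\ldots,\mu_N)$ traces out exactly the set $\bm\Lambda$, because including $\bm 0$ in the hull encodes the idling option. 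Given $\bm\lambda$ strictly interior to $\bm\Lambda$, I would select coefficients so that $\mu_i>\lambda_i+\epsilon$ for some $\epsilon>0$ and every $i$; a standard Foster-Lyapunov drift argument with $V(\bm Q)=\sum_i Q_i^2$, using the i.i.d.\ channel and arrival processes, then delivers negative expected drift outside a bounded set, hence positive recurrence of the queue chain and the required limiting stationary distribution.

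For necessity, fix any policy $g\in G$ that stabilizes $\bm\lambda$ and let $\alpha_i^{(T)}(\hat{\bm c})$ be the empirical fraction, up to slot $T$, of slots with estimate $\hat{\bm c}$ in which user $i$ is scheduled. The definition of $r_i^*(\hat{c}_i)$ forces the expected per-slot service to user $i$, conditional on estimate $\hat{\bm c}$ and on $i$ being scheduled, to be at most $P(C_i\geq r_i^*(\hat{c}_i)\mid\widehat{C}_i=\hat{c}_i)\,r_i^*(\hat{c}_i)$, irrespective of the rate chosen by $g$. Passing to a convergent subsequence of $\{\alpha_i^{(T)}(\hat{\bm c})\}$ (which exists by compactness of the simplex) yields limiting coefficients that place the time-average service vector $\bm\mu\geq\bm\lambda$ inside $\bm\Lambda$, and the downward closure of $\bm\Lambda$ (from $\bm 0$ in the hull) gives $\bm\lambda\in\bm\Lambda$. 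I expect the main obstacle to lie precisely here: in rigorously extracting well-defined limiting frequencies from an arbitrary, possibly non-stationary policy in $G$ and in justifying the interchange of limits that upper-bounds the time-average service rate by a stationary randomized service rate. This can be handled either by invoking the standard fact that, for systems with i.i.d.\ channel and arrivals, the throughput region under $G$ is achieved by stationary randomized policies, or by a direct ergodic-type argument along the converging subsequence above.
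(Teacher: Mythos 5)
Your achievability argument is essentially the paper's: a stationary randomized policy that, upon observing $\hat{\bm c}$, schedules user $i$ with probability $\alpha_i^{\hat{\bm c}}$ at rate $r_i^*(\hat c_i)$, yielding mean service rates $\mu_i=\sum_{\hat{\bm c}}P_{\widehat{\bm C}}(\hat{\bm c})\,\alpha_i^{\hat{\bm c}}\,P(C_i\geq r_i^*(\hat c_i)\mid\widehat C_i=\hat c_i)\,r_i^*(\hat c_i)$, followed by a quadratic Foster--Lyapunov drift argument. That half is fine and matches Appendix A.

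The converse is where you diverge, and where your sketch has a genuine gap. The paper does not extract limiting scheduling frequencies from the trajectory of the policy. Instead, for $\bm\lambda\notin\bm\Lambda$ it applies the strict separation theorem to get $\bm\beta$ and $\delta>0$ with $\sum_i\beta_i(\lambda_i-\nu_i)\geq\delta$ for every $\bm\nu\in\bm\Lambda$, takes the \emph{linear} Lyapunov function $L(\bm Q)=\sum_i\beta_iQ_i$, and proves a purely one-slot fact: for any policy in $G$, the conditional expected service vector $\bm\mu(\bm Q[t])$ lies in $\bm\Lambda$, because conditioning on $\widehat{\bm C}[t]$ and using the definition of $r_i^*(\hat c_i)$ bounds $P(C_i\geq R_i\mid\widehat C_i=\hat c_i)R_i$ by $P(C_i\geq r_i^*(\hat c_i)\mid\widehat C_i=\hat c_i)r_i^*(\hat c_i)$ pointwise. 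The drift of $L$ is then at least $\delta$ at every queue state, which gives instability with no limiting argument at all. By contrast, your route hinges on the step you yourself flag as the obstacle: the empirical frequencies $\alpha_i^{(T)}(\hat{\bm c})$ are functionals of the (random, queue-dependent) trajectory, a convergent subsequence gives a limit point of frequencies but not, without additional work, the inequality $\bm\mu\geq\bm\lambda$ between the subsequential service rates and the arrival rates (stability gives departures keeping pace with arrivals only in a Ces\`aro sense along the full sequence); and your fallback of ``invoking the standard fact that the throughput region under $G$ is achieved by stationary randomized policies'' is circular, since that fact is precisely what the proposition asserts. Your approach can be repaired for positive recurrent chains via an ergodic-theorem plus flow-conservation argument, but as written the converse is not closed; the one-slot bound combined with the separating hyperplane is the clean way to finish, and is what the paper does.
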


\textit{Proof Outline:} The proof contains two parts. We first show
that any rate vector $\bm \lambda$ strictly within $\bm \Lambda$ is
stably supportable by some randomized stationary policy. In the
second part, we establish that any arrival rate $\bm \lambda$
outside $\bm \Lambda$ is not supportable by any policy. We show this
by first identifying a hyperplane that separates $\bm \lambda$ and
$\bm \Lambda$ using the strict separation theorem (\cite{Separation}).
We then define an appropriate Lyapunov function and show that, for
any scheduling policy, there exists a positive drift, thus rendering
the queues unstable (\cite{Sean}). Details of the proof are available
in Appendix A.
\vspace{3pt}

\subsection{Optimal Scheduling and Rate Allocation}

In this section, we propose a maximum-weight type scheduling policy
with rate adaptation and show that it is throughput-optimal, i.e.,
it can support any arrival rate that can be supported by any other
policy in $G$. The policy is introduced next.\vspace{12pt}\\
\fbox{
\parbox[l]{0.94\linewidth}{
\vspace{9pt}\textbf{Scheduling Policy $\Psi$}\\

\vspace{-10pt}
At time slot $t$, the base station makes the scheduling and rate
adaptation decisions based on the channel/estimator joint statistics
and the channel estimate vector $\widehat{\bm C}=\hat{\bm c}$
(the time index is dropped for notational simplicity).\\

\emph{(1) Rate Adaptation:}\\

\quad \quad For each user $i$, assign rate $R_i$ such that,\\
\begin{displaymath}
R_i= \argmax_{r \in \mathcal{S}} \big \{P(C_i\geq r  \big
|\widehat{C}_i=\hat{c}_i) \cdot r \big\}
\end{displaymath}

\emph{(2) Scheduling Decision:}\\

\quad \quad Schedule the user $I$ that maximizes the
queue-weighted rate $R_i$, as follows:
\begin{displaymath}
I=\argmax_i \big \{Q_i \cdot P(C_i\geq R_i  \big
|\widehat{C}_i=\hat{c}_i) \cdot R_i\big \}
\end{displaymath}
} }\\ \\

\noindent Note that when the channel state estimation is accurate,
the conditional probability $P(C_i\geq r  \big
|\widehat{C}_i=\hat{c}_i)$ will be a step function, with $\Psi$
essentially becoming the classic maximum-weight policy in \cite{MWM}.
The next proposition establishes the throughput optimality of policy
$\Psi$. Details are provided in Appendix B.
\begin{proposition}
\label{stability} The scheduling policy $\Psi$ stably supports all arrival
rates that lie in the interior of the stability region $\bm
\Lambda$.
\end{proposition}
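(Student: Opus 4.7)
\textit{Proof Proposal:} The plan is to use a quadratic Lyapunov drift argument in the spirit of \cite{MWM}, tailored to the fact that the effective service rate depends on both the chosen transmission rate and the conditional outage probability given the estimate. Define the Lyapunov function $V(\bm Q)=\tfrac{1}{2}\sum_{i=1}^{N}Q_i^2$ and work with the one-step conditional drift $\Delta(\bm Q)=E\big[V(\bm Q[t{+}1])-V(\bm Q[t])\,\big|\,\bm Q[t]{=}\bm Q\big]$. For any policy $\pi\in G$ that, upon observing the estimate vector $\hat{\bm c}$, schedules user $I^{\pi}(\hat{\bm c})$ at rate $R^{\pi}(\hat{\bm c})$, the expected number of packets successfully served from queue $i$ in a slot is $\mu_i^{\pi}(\hat{\bm c}) = \mathbf 1(I^{\pi}(\hat{\bm c}){=}i)\cdot P(C_i\geq R^{\pi}(\hat{\bm c})\,|\,\widehat{C}_i{=}\hat{c}_i)\cdot R^{\pi}(\hat{\bm c})$. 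A standard expansion of the squared queue-length recursion yields
\begin{equation*}
\Delta(\bm Q) \;\leq\; B \;+\; \sum_{i=1}^{N} Q_i \lambda_i \;-\; \sum_{i=1}^{N} Q_i \cdot E_{\widehat{\bm C}}\big[\mu_i^{\pi}(\widehat{\bm C})\big],
\end{equation*}
where $B$ is a finite constant depending only on the second moments of the arrivals and of the maximum transmission rate (both bounded since $\mathcal{S}$ is finite and arrivals have finite variance).

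The core step is to verify that policy $\Psi$ minimizes the last term on the right-hand side over all policies in $G$, slot by slot. Conditioning on $\widehat{\bm C}=\hat{\bm c}$, policy $\Psi$ first selects, for each user $i$, the rate $R_i=r_i^*(\hat{c}_i)$ that maximizes $P(C_i\geq r\,|\,\widehat{C}_i{=}\hat{c}_i)\cdot r$, and then activates the user $I$ that maximizes $Q_i\cdot P(C_i\geq r_i^*(\hat{c}_i)\,|\,\widehat{C}_i{=}\hat{c}_i)\cdot r_i^*(\hat{c}_i)$. This is precisely the maximizer of $\sum_{i}Q_i\cdot\mu_i^{\pi}(\hat{\bm c})$ among all pairs $(I,R)$, hence also after averaging over $\widehat{\bm C}$. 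Therefore, for any other policy $\pi\in G$,
\begin{equation*}
\Delta^{\Psi}(\bm Q) \;\leq\; B \;+\; \sum_{i=1}^{N} Q_i \lambda_i \;-\; \sum_{i=1}^{N} Q_i \cdot E_{\widehat{\bm C}}\big[\mu_i^{\pi}(\widehat{\bm C})\big].
\end{equation*}

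Next I invoke Proposition \ref{stability region chara} to supply a favourable $\pi$. For any $\bm\lambda$ in the interior of $\bm\Lambda$, pick $\epsilon>0$ with $\bm\lambda+\epsilon\vec{\bm 1}_i\in\bm\Lambda$ for all $i$; by the convex-hull characterization of $\bm\Lambda$, there exists a stationary randomized policy $\pi^*$ (depending only on $\widehat{\bm C}$) whose average service rate vector dominates $\bm\lambda+\epsilon\vec{\bm 1}$ componentwise, i.e.\ $E_{\widehat{\bm C}}[\mu_i^{\pi^*}(\widehat{\bm C})]\geq\lambda_i+\epsilon$ for every $i$. Substituting $\pi=\pi^*$ in the bound above gives
\begin{equation*}
\Delta^{\Psi}(\bm Q) \;\leq\; B \;-\; \epsilon\sum_{i=1}^{N} Q_i,
\end{equation*}
which is strictly negative whenever $\sum_i Q_i > B/\epsilon$. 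Stability of every queue under $\Psi$ then follows from the Foster--Lyapunov criterion \cite{Sean}, yielding the claimed limiting distributions $F_i$ and establishing throughput optimality.

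The routine part is the drift expansion and the Foster--Lyapunov invocation; the main obstacle is the slot-by-slot optimality step. What must be argued carefully is that jointly optimizing over the user index \emph{and} the transmission rate decouples into the two-stage rule of $\Psi$ (first compute $r_i^*(\hat{c}_i)$ user-by-user, then pick the user with largest weighted expected throughput); this decoupling rests on the fact that at most one user is activated per slot under the one-hop interference model, and that the per-user objective depends on $R$ only through its own conditional success probability.
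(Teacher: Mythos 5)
Your proposal is correct and follows essentially the same route as the paper's proof: a quadratic Lyapunov drift bound, the observation that $\Psi$ maximizes the queue-weighted expected service $\sum_i Q_i\,\mathbf 1(I{=}i)P(C_i\geq R\,|\,\widehat{C}_i{=}\hat{c}_i)R$ slot by slot (hence dominates any policy in $G$ in the drift expression), and comparison against the randomized stationary policy supplied by Proposition~\ref{stability region chara} to extract a strictly negative drift, concluding via Foster--Lyapunov. The decoupling of the joint $(I,R)$ optimization into rate-then-user selection, which you rightly flag as the step needing care, is exactly the structural point underlying the paper's inequality chain.
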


\vspace{1pt}

\textit{Proof Outline:}
The proof proceeds as follows. Consider a
Lyapunov function $L(\bm Q[t])=\sum_{i=1}^N Q_i^2[t]$. For any
arrival rate $\bm \lambda$ that lies strictly within the stability
region $\bm \Lambda$, we know it is stably supportable by some
policy $G_0$. Under $G_0$, we show that the corresponding Lyapunov drift is negative. We then show that policy $\Psi$
minimizes the Lyapunov drift and hence it will have a negative
drift, thus establishing the throughput optimality of $\Psi$.
\vspace{5pt}

The results obtained thus far when the channel/estimator joint
statistics is available at the scheduler are along expected lines.
Nonetheless, they serve as a benchmark to the rest of the work under
incomplete knowledge of the channel/estimator joint statistics,
which is \emph{the main focus of the paper}.



\begin{figure*}
\centering
\includegraphics[width=3.2in]{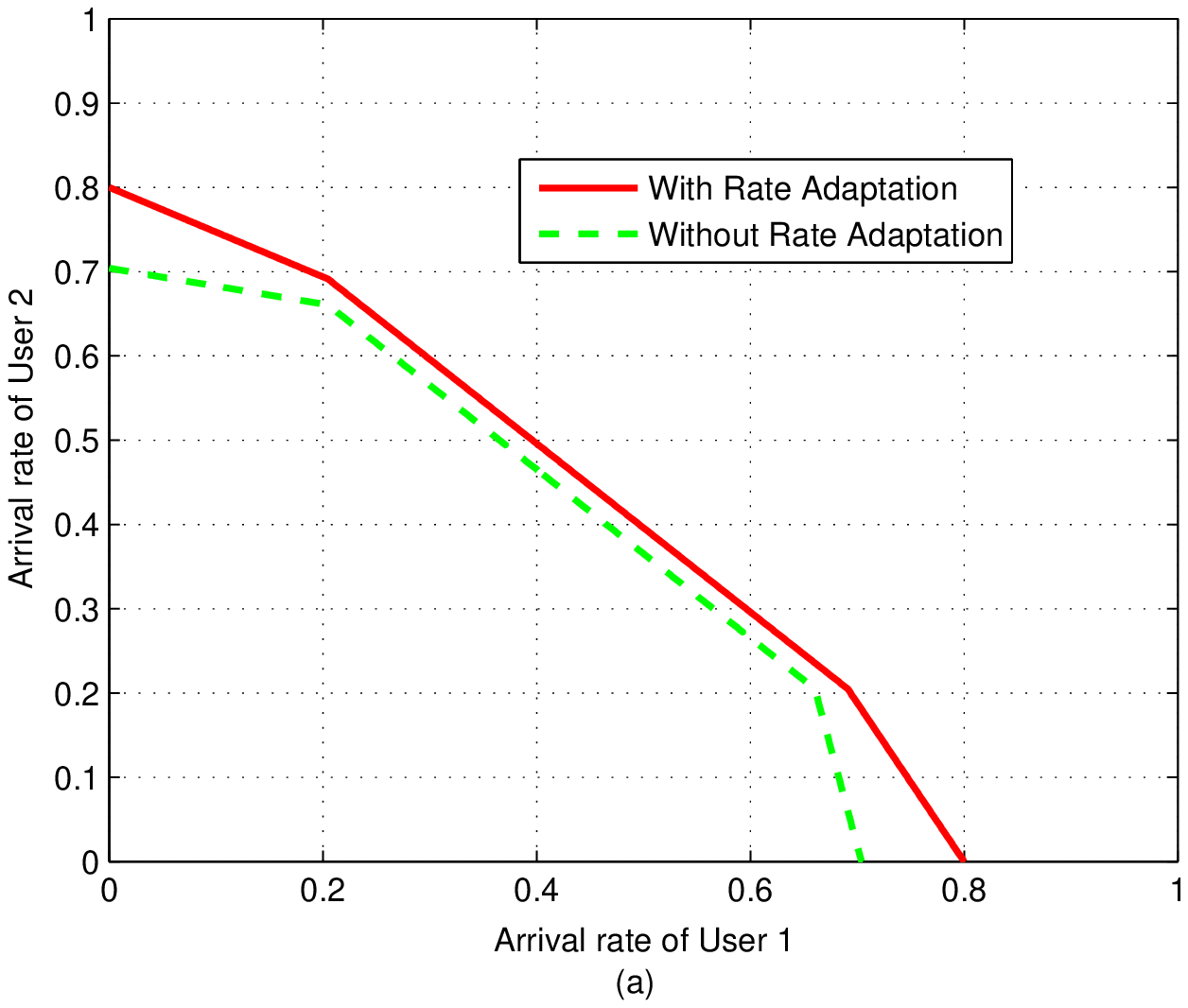}
\includegraphics[width=3.2in]{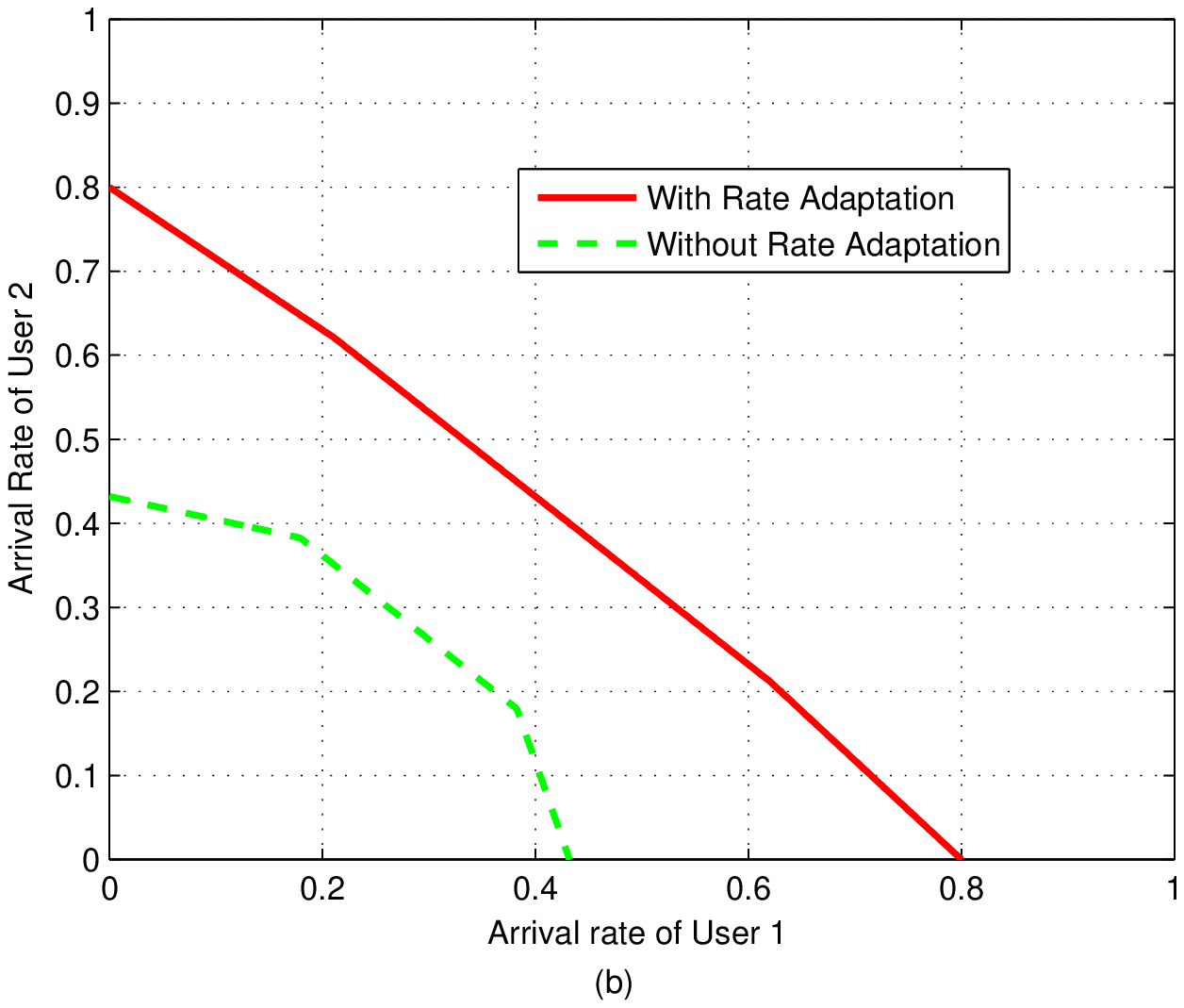}
\caption{Illustration of the system level gains associated with
joint scheduling and rate control. $P_i(\widehat{C}_i=k\big |
C_i=k)$ denotes the probability that the channel estimate of user
$i$ is $k$ given the actual channel of user $i$ is $k$. (a)
$P_i(\widehat{C}_i=k\big | C_i=k)=0.8$, \ $k \in \{ 0.2, 1 \}, \
i\in \{1,2\}$; (b) $P_i(\widehat{C}_i=k\big | C_i=k)=0.4$, $k \in \{
0.2, 1 \}, \ i\in \{1,2\}$.} \label{fig:example}
\end{figure*}

\section{Incomplete Knowledge of Channel/Estimator Joint Statistics}

In this section, we study scheduling with rate adaptation when the
scheduler only has knowledge of the marginal statistics of the
estimator, i.e., $P(\widehat{C}_i=\hat{c}_i)$, $\forall
c_i\in\mathcal{S}$, $i\in\{1,\ldots,N\}$, and hence, the knowledge
of the channel/estimator joint statistics is incomplete at the
scheduler. We first illustrate, with a simple example, that
significant system level losses are incurred when no effort is made
to learn these statistics, and hence no rate adaptation is
performed.

\subsection{Illustration of the Gains from Rate Adaptation}

With incomplete information on the channel-estimator joint
statistics, the scheduler naively trusts the channel estimates to be
actual channel states and transmits at the rate allowed in this
state. Under this scheduling structure, for the single-hop network
we consider, the stability region is given in Appendix C by
\begin{eqnarray}
\tilde{\bm \Lambda}= \sum_{\hat{\bm c}\in \mathcal{S}^N}
P_{\widehat{\bm C}}(\hat{\bm c}) \cdot \mathcal{CH}\Big[\ \bm 0, P(C_i{\geq} \hat{c}_i \big
|\widehat{C}_i{=}\hat{c}_i\ ) \hat{c}_i \cdot \vec{\bm
1}_i;i=1,\cdots, N \Big].
\end{eqnarray}
For a two-user single-hop network, this region is
plotted in Fig.~\ref{fig:example} along-side the network stability
region when full knowledge of the channel/estimator joint statistics
is available at the scheduler and hence rate adaptation is
performed. The channel between the base station and each user is
independent and binary ($\mathcal{S}=\{0.2,1 \}$) with
$P(C_i=1)=0.8, \ for \ i=1,2$. For different mismatch between the
channel and the estimate, Fig.~\ref{fig:example} plots the stability
region of the system when rate adaptation is performed and when it
is not. Note the significant reduction in the stability region when
rate adaptation is not performed. This loss increases with increase
in the degree of channel-estimator mismatch. The preceding example
underscores the importance of rate adaptation and hence the need to
learn the channel/estimator joint statistics. We now proceed to
introduce our joint statistics learning-scheduling policy. 

\subsection{Joint Statistics Learning - Scheduling Policy}

We design the policy with the following main components: (1) The
fraction of time slots the policy spends in learning the
channel/estimator joint statistics is fixed at $\gamma\in(0,1)$, (2)
The worst-case rate of convergence of the statistics learning
process is maximized.
We formally introduce the policy next, followed by a discussion on
the policy design. \\
\fbox{
\parbox[c]{0.96\linewidth}{
\vspace{11pt}\textbf{Joint statistics learning-scheduling policy \\ (parameterized by $\gamma$)}\\
\vspace{0pt}

\noindent \emph{(1)} In each slot, the scheduler first decides
whether to explore the channel of one of the users or transmit data
to one of the users. Specifically, it randomly decides to explore
the channel of user $i$ with probability $x^i_{\hat{c}_i}/N$ where
$\sum_{i=1}^Nx^i_{\hat{c}_i}/N<1$. The quantity $x^i_{\hat{c}_i}\in
(0,1]$ is a function of $\gamma$ and the channel estimate,
$\hat{c}_i$, of user $i$. It is optimized to maximize the worst-case
rate of convergence of the statistics learning mechanism subject to
the $\gamma$ constraint. We postpone the discussion on this
optimization to Proposition~\ref{prop:waterfill}. Note that, we have
dropped the
time index from the estimates for ease of notation.\\

\noindent \emph{(2)} If a user is chosen for exploration, this time
slot becomes an observing slot. Call the chosen user as $e$. The
scheduler now sends data at a rate $r$ that is chosen uniformly at
random from the set $\mathcal{S}$. Let the quantity $\xi(t)$
indicate whether the transmission was successful or not:
\begin{align}
\xi(t)= \textbf{1} ( c_e\geq r \big),\nonumber
\end{align}
}}

\fbox{
\parbox[c]{.94\linewidth}{
\vspace{5pt} where, recall, $c_e$ denotes the current channel state
of user $e$. Let $\Theta_{i,\hat{c},r}$ denote the set of
exploration time slots when the channel estimate of user $i$ was
$\hat{c}$ and user $i$ was explored with rate $r$. Thus, the current
slot is added to the set $\Theta_{e,\hat{c}_e,r}$.
Now, an estimate of the quantity $P(C_e\geq r \Big
|\widehat{C}_e=\hat{c}_e)$ is obtained using the following update:
\begin{align}
\nonumber \widehat{P}_t(C_e\geq r \Big
|\widehat{C}_e=\hat{c}_e)=\frac{\sum_{k \in \Theta_{e,\hat{c}_e,r}}
\xi(k)}{|\Theta_{e,\hat{c}_e,r}|}
\end{align}
where $|\mathcal{V}|$ denotes the cardinality of set
$\mathcal{V}$. We assume $\widehat{P}_t(C_e|\widehat{C}_e)$ to be uniform when
$\Theta_{e,\hat{c}_e,r}=\emptyset$, i.e., $\widehat{P}_t(C_e\geq r
\Big |\widehat{C}_e=\hat{c}_e)=1-r/|\mathcal{S}|$.\\

\emph{(3)} With probability
$1-\sum_{i=1}^{N}\frac{x^i_{\hat{c}_i}}{N}$, no user is chosen for
exploration and the slot is used for data transmission. The
scheduler follows policy $\Psi$ introduced in the previous section
with $P(C_i\geq r \Big |\widehat{C}_i=\hat{c}_i \ )$ replaced by the
estimate $\widehat{P}_t(C_i\geq r \Big |\widehat{C}_i=\hat{c}_i \
)$.
} }\\ \vspace{12pt}

\begin{figure*}
\centering
\includegraphics[width=5.5in]{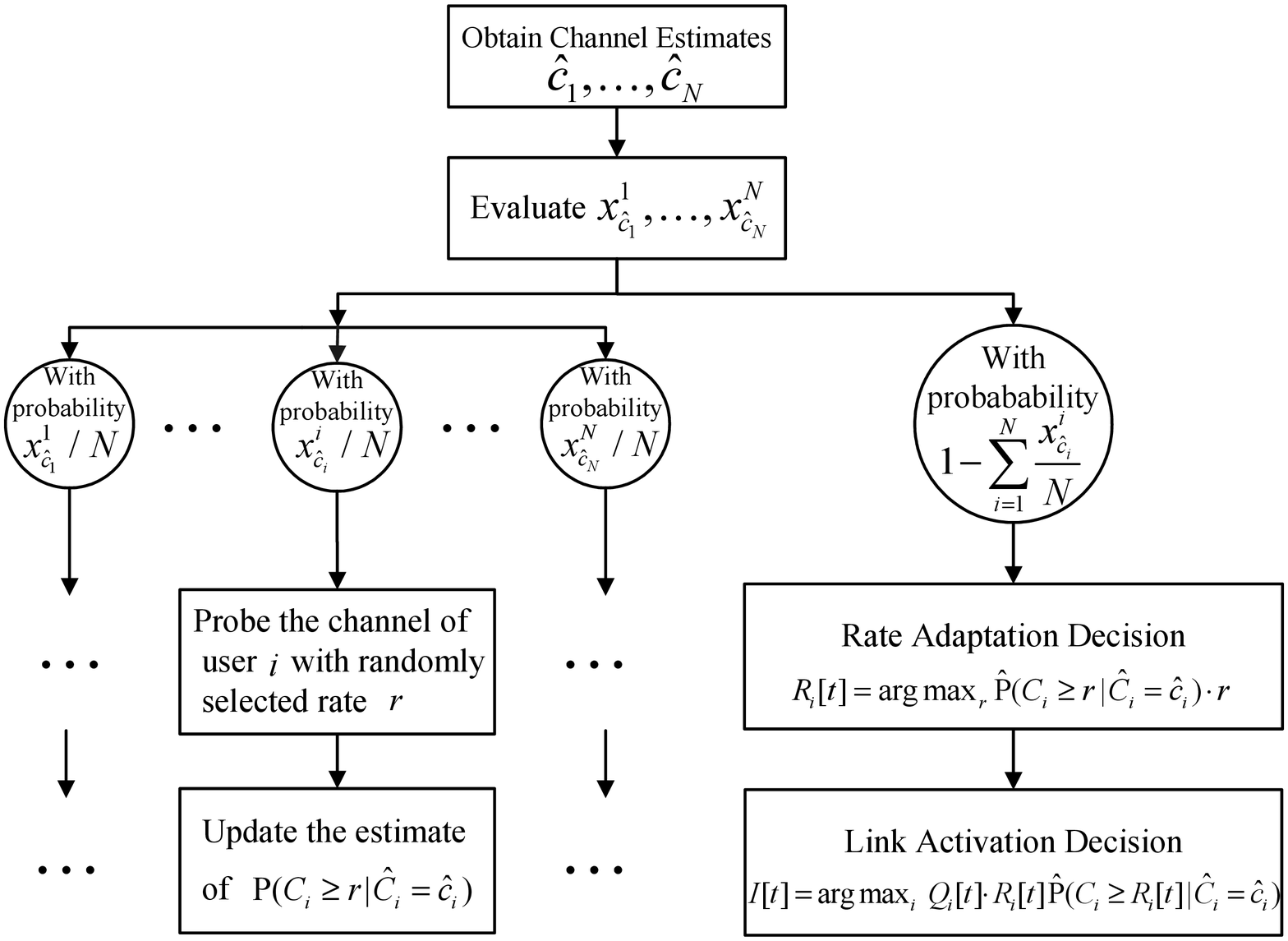}
\caption{Illustration of the joint channel learning - scheduling
policy.} \label{fig:policyillustration}
\end{figure*}
\vspace{-4pt}
An illustration of the proposed policy is provided in
Fig.~\ref{fig:policyillustration}. We now discuss the design of the
quantities $x^i_{\hat{c}_i}$, $\hat{c}_i \in \mathcal{S}$,
$i\in\{1,\ldots,N\}$. Let
$\eta_{i,\hat{c}_i}=P(\widehat{C}_i=\hat{c}_i)\frac{x^i_{\hat{c}_i}}{N}$
be a measure of how often the channel of user $i$ is explored when
the estimate is $\hat{c}_i$. For fairness considerations, we impose
the following constraint in addition to the $\gamma$-constraint
discussed earlier:
\begin{align}
\nonumber \sum_{\hat{c}_i \in \mathcal{S}}\eta_{i,\hat{c}_i}=\gamma
/ N.
\end{align}

The preceding constraint ensures that each user's channel is
explored for an equal fraction, $\gamma/N$, of the total time slots.
From strong law of large numbers, with probability one, $\widehat{P}_t(C_i\geq r \Big
|\widehat{C}_i=\hat{c}_i )$ will converge to $P(C_i\geq r \Big
|\widehat{C}_i=\hat{c}_i )$ as $t$ tends to infinity. The rate of
convergence of the channel/estimate joint statistics, parameterized
by the user and the channel estimate, is given by the following
lemma. Henceforth, we drop the suffix $i$ from $\hat{c}_i$ for
notational convenience.
%
%
\begin{lemma}
\begin{align}
\nonumber \limsup_{t\rightarrow \infty} \frac{\widehat{P}_t(C_i\geq
r \big |\widehat{C}_i=\hat{c} ){-}P(C_i\geq r \big
|\widehat{C}_i=\hat{c} )}{\sqrt{\frac{ \log \log
(\frac{\eta_{i,\hat{c}}t}{|\mathcal{S}|})}{(\frac{\eta_{i,\hat{c}}t}{|\mathcal{S}|})}}}=\sqrt{2}\sigma
\end{align}
almost surely (a.s.), where
\begin{align}
\sigma=\sqrt{P(C_i\geq r \big
|\widehat{C}_i=\hat{c})(1-P(C_i\geq r \big |\widehat{C}_i=\hat{c}
))} \nonumber.
\end{align}
\end{lemma}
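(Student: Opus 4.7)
\textit{Proof proposal.} The plan is to recognize the estimator $\widehat{P}_t(C_i\geq r\mid \widehat{C}_i=\hat{c})$ as a sample mean of i.i.d.\ Bernoulli random variables indexed by a random time $|\Theta_{i,\hat c,r}|$, apply the classical Hartman--Wintner Law of the Iterated Logarithm (LIL), and then convert from the random index $|\Theta_{i,\hat c,r}|$ back to the deterministic time $t$ via a strong law of large numbers on the arrival counts of the set $\Theta_{i,\hat c,r}$.

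First, I would establish the probabilistic structure of $\{\xi(k):k\in \Theta_{i,\hat c,r}\}$. Since the channel/estimator pairs $\{(C_i[k],\widehat{C}_i[k])\}$ are i.i.d.\ across $k$, and the event $\{k\in\Theta_{i,\hat c,r}\}$ is determined by $\widehat{C}_i[k]=\hat c$ together with the scheduler's independent randomization (user $i$ is picked for exploration with probability $x^i_{\hat c}/N$ and the probed rate $r$ is chosen uniformly from $\mathcal S$), conditioning on $k\in\Theta_{i,\hat c,r}$ is equivalent to conditioning on $\widehat{C}_i[k]=\hat c$. Therefore $\{\xi(k):k\in\Theta_{i,\hat c,r}\}$ is an i.i.d.\ Bernoulli$(p)$ sequence with $p=P(C_i\geq r\mid \widehat{C}_i=\hat c)$ and variance $\sigma^2=p(1-p)$. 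Moreover, each slot $k$ belongs to $\Theta_{i,\hat c,r}$ independently with probability $\eta_{i,\hat c}/|\mathcal S|$, so the SLLN gives $|\Theta_{i,\hat c,r}|/t\to\eta_{i,\hat c}/|\mathcal S|$ almost surely as $t\to\infty$.

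Next, write $\widehat{P}_t(C_i\geq r\mid\widehat{C}_i=\hat c)=\bar\xi_{n(t)}$, where $n(t):=|\Theta_{i,\hat c,r}|$ and $\bar\xi_n$ is the sample mean of the first $n$ elements of the sequence $\{\xi(k):k\in\Theta_{i,\hat c,r}\}$, listed in time order. The Hartman--Wintner LIL for i.i.d.\ sequences with finite variance yields
\begin{equation}
\limsup_{n\to\infty}\frac{\bar\xi_n-p}{\sqrt{\log\log n/n}}=\sqrt{2}\,\sigma\quad \text{a.s.}\nonumber
\end{equation}
Since $n(t)$ is a.s.\ nondecreasing and diverges to infinity, the $\limsup$ along the (random) subsequence $\{n(t)\}$ coincides with the $\limsup$ over all integers.

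Finally, it remains to replace the normalization $\sqrt{\log\log n(t)/n(t)}$ by $\sqrt{\log\log(\eta_{i,\hat c}t/|\mathcal S|)/(\eta_{i,\hat c}t/|\mathcal S|)}$. Using $n(t)\sim (\eta_{i,\hat c}/|\mathcal S|)\, t$ a.s., a direct computation shows that the ratio of these two quantities converges to $1$ almost surely, so the desired statement follows. The main obstacle I anticipate is the careful justification of the random time change: one has to verify that the i.i.d.\ Bernoulli structure of $\{\xi(k)\}_{k\in\Theta_{i,\hat c,r}}$ is not disturbed by the scheduler's adaptive exploration decisions (which in principle depend on the past observations), and that the filtration-level independence between the Bernoulli draws and the (random) indices in $\Theta_{i,\hat c,r}$ is enough to invoke the LIL along the sample path. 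Once that is pinned down by conditioning on the estimates $\widehat{C}_i[k]$ and the scheduler's internal coin flips, everything else reduces to standard asymptotic manipulations.
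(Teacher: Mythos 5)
Your proposal is correct and follows essentially the same route as the paper: identify $\{\xi(k)\}_{k\in\Theta_{i,\hat c,r}}$ as an i.i.d.\ Bernoulli$(p)$ sequence, apply the Hartman--Wintner LIL along the random index $n(t)=|\Theta_{i,\hat c,r}|$, and use the fact that $n(t)/(\eta_{i,\hat c}t/|\mathcal S|)\to 1$ a.s.\ (the paper phrases this via renewal theory for the counting process $N_r[t]$, you via the SLLN on the exploration indicators) to pass to the deterministic normalization. Your extra care about the adaptivity of the exploration decisions is resolved exactly as you suspect, since the probability $x^i_{\hat c}/N$ depends only on the current estimate and an independent coin flip, not on past observations.
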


\begin{proof}
We use $N_{r}[t]$ to denote the number of exploration slot
corresponding to estimated channel $\widehat{C}_i=\hat{c}$ and rate $r$. We express the left hand side of the equation
in the lemma as follows.
\begin{align}
\nonumber &\limsup_{t\rightarrow \infty} \frac{\widehat{P}_t(C_i\geq
r \Big |\widehat{C}_i=\hat{c} )-P(C_i\geq r \Big
|\widehat{C}_i=\hat{c} )}{\sqrt{(2 \log \log (\eta_{i,\hat{c}} t / |\mathcal{S}|)) /
(\eta_{i,\hat{c}} t/ |\mathcal{S}|)}} \\
=& \limsup_{t\rightarrow \infty} \frac{\widehat{P}_t(C_i\geq r \Big
|\widehat{C}_i=\hat{c} )-P(C_i\geq r \Big |\widehat{C}_i=\hat{c}
)}{\sqrt{(2 \log \log N_r[t]) / N_r[t]}} \ \cdot \nonumber \\
& \hspace{0.5in} \sqrt{\frac{\log \log
N_r[t]}{\log \log (\eta_{i,\hat{c}} t/ |\mathcal{S}|)} \cdot \frac {\eta_{i,\hat{c}} t/ |\mathcal{S}|}{N_r[t]}}
\end{align}

From Law of Iterated Logarithm (\cite{Billingsley}), we get
\begin{align}
\limsup_{t\rightarrow \infty} \frac{\widehat{P}_t(C_i\geq
r \Big |\widehat{C}_i=\hat{c} )-P(C_i\geq r \Big
|\widehat{C}_i=\hat{c})}{\sqrt{(2 \log \log N_{r}[t]) /
N_{r}[t]}}=\sigma
\end{align}
almost surely. We also have
\begin{align}
\nonumber \frac{\log \log N_r[t]}{\log \log (\eta_{i,\hat{c}} t/ |\mathcal{S}|)}=&
1+\frac{\log \log N_r[t]-\log \log (\eta_{i,\hat{c}} t/ |\mathcal{S}|)}{\log \log
(\eta_{i,\hat{c}} t/ |\mathcal{S}|)}\\
=& 1+\frac{\log (1+\frac{\log [N_r[t] / (\eta_{i,\hat{c}} t/ |\mathcal{S}|)]}{\log (\eta_{i,\hat{c}}
t/ |\mathcal{S}|)})}{\log \log (\eta_{i,\hat{c}} t/ |\mathcal{S}|)}.
\end{align}

Because $\{ N_r[t] \}$ is a renewal process (\cite{Gallager}) with inter-renewal
time $(\eta_{i,\hat{c}}/ |\mathcal{S}|)^{-1}$, we will have
\begin{align}
\lim_{t\rightarrow \infty} N_r[t] / (\eta_{i,\hat{c}} t/ |\mathcal{S}|)=1 \quad almost \ surely.  \nonumber
\end{align}

Hence (6) tends to $1$ almost surely. Substituting equation (5) and (6) into (4) we get
\begin{align}
\nonumber &\limsup_{t\rightarrow \infty} \frac{\widehat{P}_t(C_i\geq
r \Big |\widehat{C}_i=\hat{c})-P(C_i\geq r \Big
|\widehat{C}_i=\hat{c} )}{\sqrt{(2 \log \log (\eta_{i,\hat{c}} t/ |\mathcal{S}|)) /
(\eta_{i,\hat{c}} t/ |\mathcal{S}|)}}=\sigma,
\end{align}
almost surely.
\end{proof}

\vspace{8pt}

Note from the preceding lemma that, for each $\{i,\hat{c}\}$, the
higher the quantity $\eta_{i,\hat{c}}$, the faster the convergence
of $\widehat{P}_t(C_i\geq r \big |\widehat{C}_i=\hat{c})$. Also note
that, for each user $i$, the channel estimate $\hat{c}$ with the
slowest convergence affects the overall convergence performance for
that user $i$. Taking note of this, we proceed to design
$x^i_{\hat{c}}$ that maximizes the lowest convergence rate -- the
bottleneck.

The optimization problem $(U)$ for user $i$ is given by
\begin{align}
\max_{x^i_{\hat{c}}} \quad \min_{\hat{c}} \quad
&\eta_{i,\hat{c}}=\frac{1}{N}P(\widehat{C}_i=\hat{c})x^i_{\hat{c}}\nonumber \\
s.t. \hspace{0.012\textwidth} \qquad &\sum_{\hat{c} \in
\mathcal{S}} \eta_{i,\hat{c}}=\frac{\gamma}{N} \nonumber \\
&0< x^i_{\hat{c}} \leq 1,  \quad \textrm{for all}  \quad \hat{c} \in
\mathcal{S}\nonumber
\end{align}

%

For ease of exposition, we assume, without loss of generality, that
$P(\widehat{C}_i=s_1) \leq P(\widehat{C}_i=s_2) \leq \cdots \leq
P(\widehat{C}_i=s_{|\mathcal{S}|})$. Let $[ x_{s_1}^{i*},
x_{s_2}^{i*}, \cdots ,x_{s_{|\mathcal{S}|}}^{i*}]$ be the optimal
solution to the above problem.  We now record the structural
properties of the optimal solution.

%

\begin{figure*}
\centering
\includegraphics[width=3.2in]{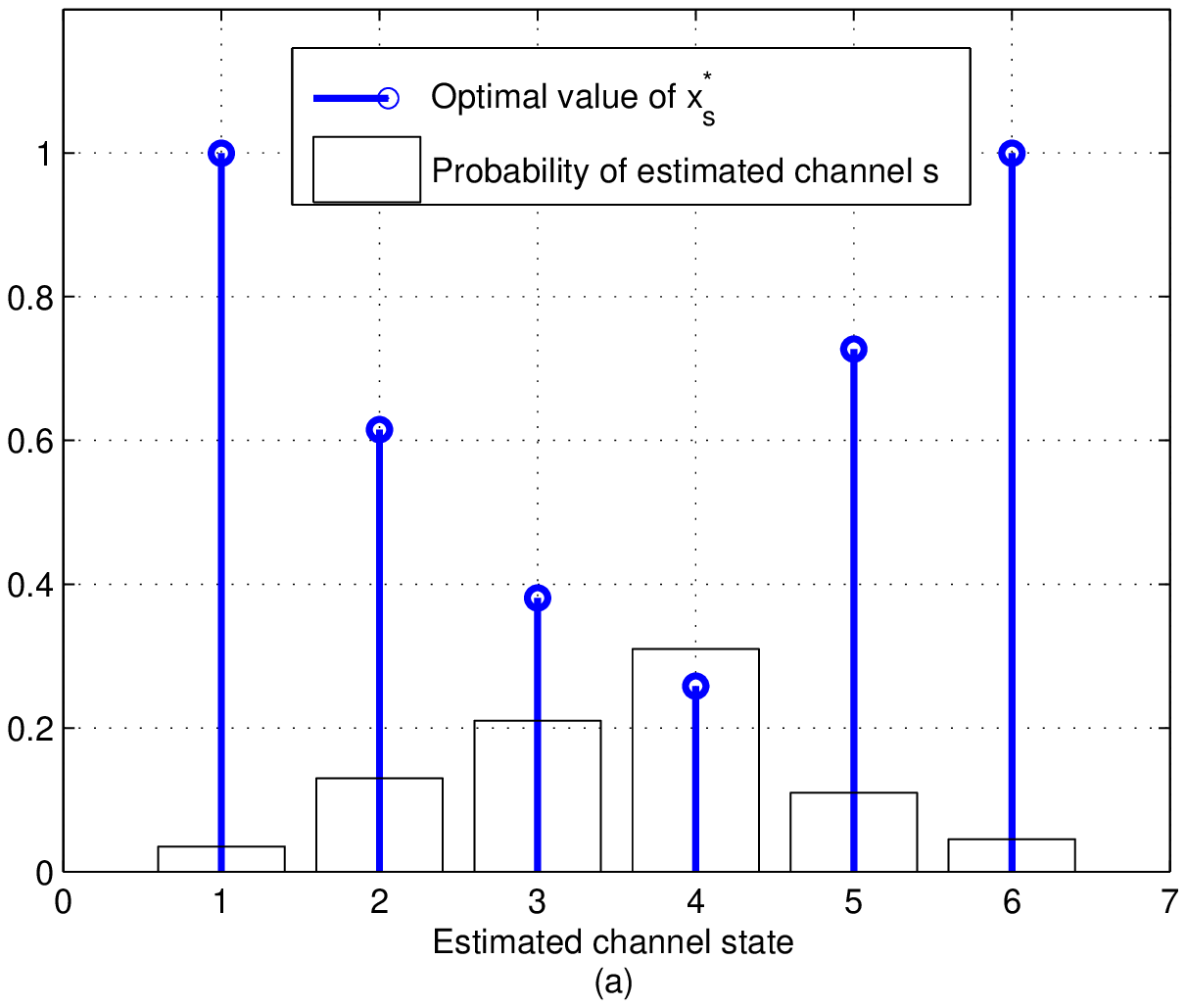}
\includegraphics[width=3.2in]{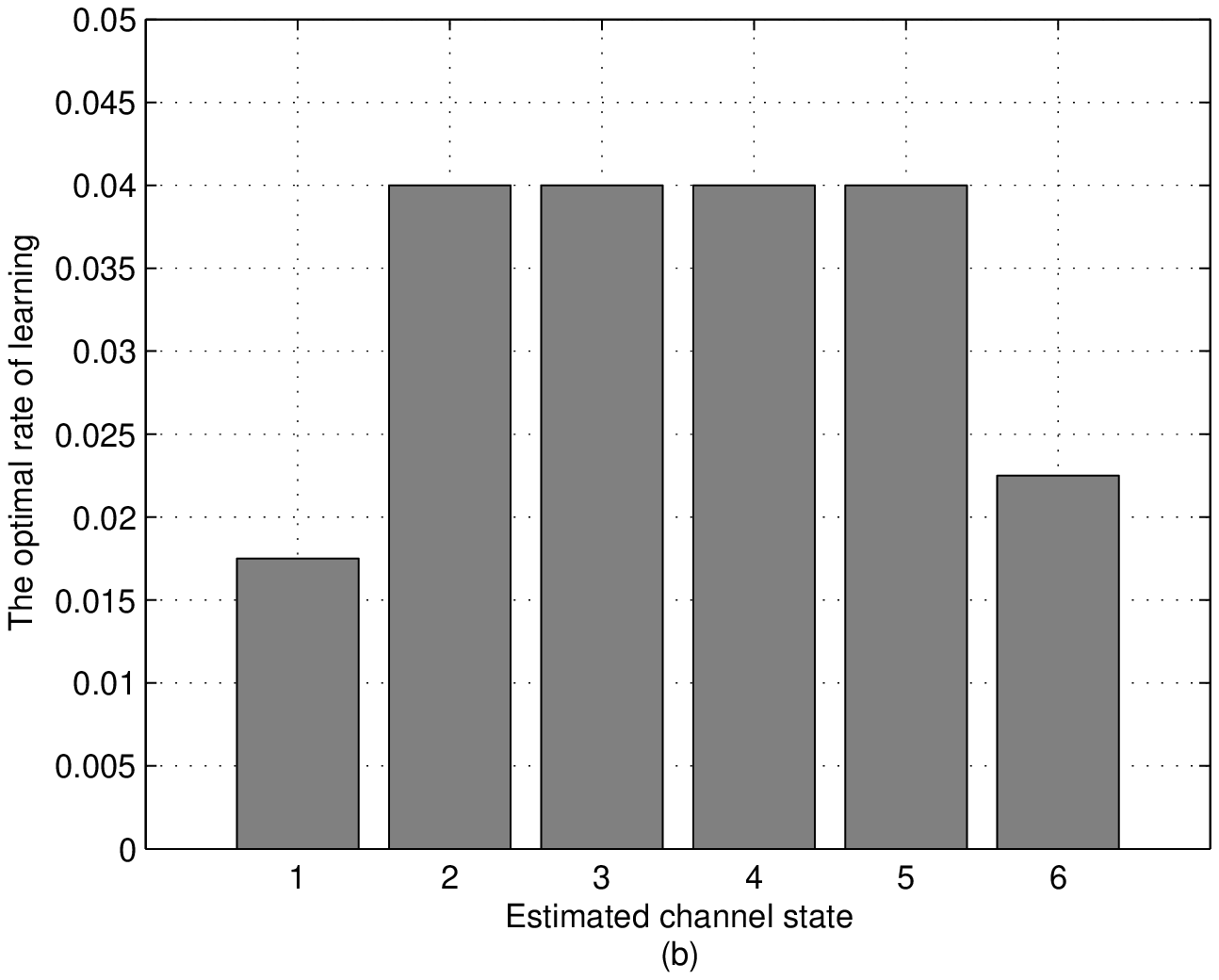}
\caption{Illustration of the design of optimal $x^*_{\hat{c}}$ when
$N=2$, $\gamma=0.2$ and $\mathcal{S}=\{1,\ldots, 6\}$.}
\label{fig:xallocation}
\end{figure*}

\begin{proposition}\label{prop:waterfill}
The solution $x_{s_k}^{i*}$, $\forall
k\in\{1,\ldots,|\mathcal{S}|\}$, to the optimization problem (U) can
be obtained with the following
algorithm:\vspace{5pt}\\
\fbox{
\parbox[c]{0.92\linewidth}{
\begin{itemize}
\item[(1)]
Initialization: Let $k=1$; $\Gamma=\emptyset$, $\omega=0$;\\
\item[(2)]
If $P(\widehat{C}=s_k) \geq \frac{\gamma-\sum_{s_j\in\Gamma}
P(\widehat{C}_i=s_j)}{|\mathcal{S}|-w}$, then,
\begin{align}
\nonumber  \forall l\geq k, \
x_{s_l}^{i*}=\frac{\gamma-\sum_{s_j\in\Gamma}
P(\widehat{C}_i=s_j)}{(|\mathcal{S}|-\omega) \cdot
P(\widehat{C}=s_l)}.
\end{align}
Algorithm terminates.\\
\item[(3)]
Otherwise $x_{s_k}^{i*}= 1$, $\Gamma=\Gamma \cup s_k$,
$\omega=\omega+1$, $k=k+1$. If $\Gamma=\mathcal{S}$, algorithm
terminates, otherwise repeat Step (2).
\end{itemize}
} }\\
\end{proposition}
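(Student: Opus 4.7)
The plan is to verify feasibility of the algorithm's output $(x^{i*}_{\hat{c}})$, identify its inner minimum $\min_{\hat{c}} \eta^*_{i,\hat{c}}$ in closed form, and then match this value with an upper bound valid for every feasible allocation. Let $m$ denote the final value of $\omega$, so that Step~(2) succeeds on the pass with $k=m+1$ and $\Gamma=\{s_1,\ldots,s_m\}$ (with $m=0$ if Step~(2) fires immediately). Abbreviate $\gamma_j := \gamma - \sum_{l=1}^{j} P(\widehat{C}_i=s_l)$ and $|\mathcal{S}|_j := |\mathcal{S}|-j$.

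\textbf{Feasibility and closed form.} Direct substitution gives $\sum_{\hat{c}} P(\widehat{C}_i=\hat{c})\,x^{i*}_{\hat{c}} = \sum_{j=1}^m P(\widehat{C}_i=s_j) + |\mathcal{S}|_m \cdot \gamma_m/|\mathcal{S}|_m = \gamma$, verifying the budget constraint. For $l > m$, the Step~(2) termination condition $P(\widehat{C}_i=s_{m+1}) \ge \gamma_m/|\mathcal{S}|_m$ combined with the ordering $P(\widehat{C}_i=s_l) \ge P(\widehat{C}_i=s_{m+1})$ yields $x^{i*}_{s_l} \le 1$; strict positivity is immediate. Substituting into $\eta_{i,\hat{c}} = P(\widehat{C}_i=\hat{c}) x^{i*}_{\hat{c}}/N$ gives $\eta^*_{i,s_l} = \gamma_m/(N|\mathcal{S}|_m)$ for $l > m$ and $\eta^*_{i,s_j} = P(\widehat{C}_i=s_j)/N$ for $j \le m$.

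\textbf{Locating the minimum.} The key algebraic identity is
\[
\frac{\gamma_j}{|\mathcal{S}|_j} - \frac{\gamma_{j-1}}{|\mathcal{S}|_{j-1}} = \frac{\gamma_{j-1} - |\mathcal{S}|_{j-1}\,P(\widehat{C}_i=s_j)}{|\mathcal{S}|_{j-1}(|\mathcal{S}|_{j-1}-1)},
\]
whose right side is strictly positive precisely when Step~(3) is invoked, i.e.\ for every $j \le m$. Iterating this monotonicity yields $P(\widehat{C}_i=s_j) < \gamma_{j-1}/|\mathcal{S}|_{j-1} \le \gamma_m/|\mathcal{S}|_m$ for each $j \le m$, so every capped entry $\eta^*_{i,s_j}$ is strictly smaller than the equalized value $\gamma_m/(N|\mathcal{S}|_m)$. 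Combined with the sorting of the $P(\widehat{C}_i=s_j)$'s, the overall minimum is $\eta^*_{i,s_1} = P(\widehat{C}_i=s_1)/N$ when $m \ge 1$, and $\gamma/(N|\mathcal{S}|)$ when $m = 0$.

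\textbf{Optimality and main obstacle.} Let $(x^i_{\hat{c}})$ be any feasible allocation. If $m=0$, then $\sum_{\hat{c}} \eta_{i,\hat{c}} = \gamma/N$ together with nonnegativity forces $\min_{\hat{c}} \eta_{i,\hat{c}} \le \gamma/(N|\mathcal{S}|)$, since the minimum of $|\mathcal{S}|$ nonnegative numbers cannot exceed their average. If $m \ge 1$, the box constraint $x^i_{s_1} \le 1$ forces $\eta_{i,s_1} \le P(\widehat{C}_i=s_1)/N$, hence $\min_{\hat{c}} \eta_{i,\hat{c}} \le P(\widehat{C}_i=s_1)/N$. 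Either way the matching upper bound shows that $(x^{i*}_{\hat{c}})$ maximizes the inner minimum. The main technical step is the monotonicity identity above; without it one could not conclude that $\min_{\hat{c}} \eta^*_{i,\hat{c}}$ sits at $s_1$ rather than at one of the equalized coordinates, and the simple two-case upper-bound argument (via the budget constraint for $m=0$, via the box constraint for $m\ge 1$) would fail to be tight.
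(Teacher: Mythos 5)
Your proof is correct, and it takes a genuinely different route from the paper. The paper first proves two structural lemmas by perturbation/exchange arguments (if no coordinate is capped at $1$ then all products $P(\widehat{C}_i=s_k)x^{i*}_{s_k}$ equal $\gamma/|\mathcal{S}|$; if some coordinate is capped then $x^{i*}_{s_1}=1$ and the optimal value is $P(\widehat{C}_i=s_1)$), and then runs an induction over reduced problems $(U^+)$ obtained by peeling off $s_1$, re-verifying at each stage that the reduced optimum preserves the original one. You instead certify the algorithm's output directly: you compute its objective value in closed form ($P(\widehat{C}_i=s_1)/N$ if any coordinate is capped, $\gamma/(N|\mathcal{S}|)$ otherwise, using the telescoping identity to show the equalized level only increases and therefore never becomes the bottleneck), and match it against an upper bound that holds for every feasible point (the box constraint at $s_1$ in one case, the average-versus-minimum bound from the budget constraint in the other). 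This is shorter, avoids the exchange arguments and the induction entirely, and as a by-product gives the clean formula $\min\{P(\widehat{C}_i=s_1),\ \gamma/|\mathcal{S}|\}/N$ for the optimal value; what it does not give, and the paper's lemmas do, is a characterization of the full set of optimal solutions. Two small points you should make explicit: (i) the algorithm always terminates through Step (2) (so $m<|\mathcal{S}|$), because $\gamma<1=\sum_k P(\widehat{C}_i=s_k)$ forces the Step-(2) test to pass at the latest when $|\mathcal{S}|-\omega=1$; and (ii) strict positivity of $x^{i*}_{s_l}$ for $l>m$ is not quite "immediate" but follows from your own monotonicity chain, which gives $\gamma_m/|\mathcal{S}|_m\geq\gamma/|\mathcal{S}|>0$.
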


\textit{Proof Outline:}
The proof proceeds by establishing two
crucial properties of the optimal solution. First, define $\Omega_i$
as the set of all channel estimates $s_k$ such that the optimal
$x_{s_k}^{i*}=1$. Thus $\Omega_i=\cup_k\{s_k: \ x_{s_k}^{i*}=1 \}$.
If no such estimate exists, $\Omega_i=\emptyset$. The optimal
solution has the following properties:
\begin{itemize}
\item[(1)] If \ $\Omega_i=\emptyset$ then $P(\widehat{C}_i=s_k)x_{s_k}^{i*}=\gamma/|\mathcal{S}|, \forall k$.
\item[(2)] If $\Omega_i \neq \emptyset$, then $x_{s_1}^{i*}=1$.
\end{itemize}

Recall that the channel states are ordered such that
$P(\widehat{C}_i=s_1) \leq P(\widehat{C}_i=s_2) \leq \cdots \leq
P(\widehat{C}_i=s_{|\mathcal{S}|})$. The first property essentially
says that if there does not exist a channel estimate $s$, for which
$x_{s}^{i*}=1$, then the optimal solution is such that the learning
rate ($\frac{P(\widehat{C}=s_k)x_{s_k}^{i*}}{N}$) is uniform
($\frac{\gamma/|\mathcal{S}|}{N}$) for all $s_k$, $k\in \{1,\ldots,
|\mathcal{S}|\}$. Because, otherwise, there is always room to
improve the bottleneck convergence rate by redesigning the
quantities $x_{s_k}^{i*}$.
The second property says that whenever there exists an estimate
$s_{k,k\neq 1}$ for which $x_{s_k}^{i*}=1$, the estimate $s_1$ acts
as a bottleneck, and the optimal value of $x_{s_1}^{i*}$ must be 1.
The proposed algorithm now checks whether a solution yielding
uniform convergence rate is feasible. If so, the solution is
trivially given by
$x_{s_k}^{i*}=\frac{1}{P(\widehat{C}_i=s_k)}\frac{\gamma}{|\mathcal{S}|}$,
for all $k\in \{1,\ldots,|\mathcal{S}|\}$. Otherwise, using the
preceding properties, the algorithm assigns $x_{s_1}^{i*}=1$ and
goes on to solve the reduced optimization problem over
$x_{s_2}^{i*}\ldots,x_{s_{|\mathcal{S}|}}^{i*}$, iteratively.
Details of the proof can be found in Appendix D.
\vspace{8pt}

The proposed algorithm is illustrated in Fig.~\ref{fig:xallocation}
when $\gamma=0.2$, $N=2$ and $\mathcal{S}=\{1,\ldots, 6\}$. Focusing
on User $1$, Fig.~\ref{fig:xallocation}(a) plots the probability of
the estimated channels and the optimal values of $x^{1*}_{s}$, $s
\in \mathcal{S}$. Note that, the lower the value of
$P(\widehat{C}_1=s)$, the higher the assigned $x^{1*}_{s}$, since
the algorithm maximizes the bottleneck convergence rate
$\frac{P(\widehat{C}=s)x_{s}^{1*}}{N}$. This is further illustrated
in Fig.~\ref{fig:xallocation}(b) where the optimized convergence
rate is shown to be `near uniform', underlining the minmax nature of
the optimization. Note that the structure of the minmax algorithm
bears some similarity with the water-filling algorithm used in power
allocation across parallel channels (\cite{Tse}). There the algorithm
tries to `equalize' the sum of two components (signal and noise
powers) across channels, while the minmax algorithm we propose tries
to `equalize' the \textit{product} of two components
($P(\widehat{C}_i=s)$ and $x_{s}^{i*}$).

We now perform a stability region analysis of the proposed policy.
Define the stability region of a policy as the exhaustive set of
arrival rates such that the network queues are rendered stable under
the policy. The stability region of the proposed policy,
parameterized by $\gamma\in(0,1)$, is recorded below.

\begin{proposition}
The stability region $\Lambda'_{\gamma}$ of the proposed policy is given by
\begin{align}
\nonumber \bm \Lambda'_{\gamma}=\{\bm \lambda \ \ s.t. \ \ \frac{\bm
\lambda}{1-\gamma} \in \bm \Lambda\}\triangleq (1-\gamma)\bm
\Lambda.
\end{align}
where $\bm \Lambda$ is the stability region of the network when
complete channel/estimator joint statistics is available at the
scheduler.
\end{proposition}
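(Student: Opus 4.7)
The plan is to follow the drift-based approach used in the proof of Proposition~\ref{stability}, but adapted to account for the two effects specific to the learning policy: only a fraction $1-\gamma$ of slots (on average) are used for data transmission, and those slots use the empirical conditionals $\widehat{P}_t$ in place of the true $P$. Fix $\bm\lambda$ strictly interior to $(1-\gamma)\bm\Lambda$, so that $\bm\lambda/(1-\gamma)$ lies strictly interior to $\bm\Lambda$. Using the Lyapunov function $L(\bm Q[t])=\sum_i Q_i^2[t]$, the standard one-slot drift bound gives
\begin{align*}
\mathbb{E}[\Delta L(t) \mid \bm Q[t], \widehat{P}_t] \leq B + 2\sum_i Q_i[t]\bigl(\lambda_i - \mathbb{E}[S_i \mid \bm Q[t], \widehat{P}_t]\bigr),
\end{align*}
where $B$ is a per-slot constant coming from bounded arrivals and service rates, and $S_i$ is the number of successfully transmitted packets to user $i$ in slot $t$. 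I would then decompose $\mathbb{E}[S_i]$ into the contribution from exploration slots (which occur with total probability $\sum_i x^i_{\widehat C_i}/N$, averaging to $\gamma$ by the fairness constraint built into the policy) and the contribution from the data slots, which apply $\Psi$ using $\widehat{P}_t$ in place of $P$.

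Next, I would note that the learning process is autonomous---the sequence of exploration slots and observed $\xi(t)$'s depends only on $\widehat{\bm C}[t]$ and independent coin flips, not on the queue state---so by Lemma~1 (or directly by the strong LLN applied to the renewal counts $|\Theta_{i,\hat c,r}|$), $\widehat{P}_t \to P$ almost surely. Consequently, for every $\epsilon>0$ there is an a.s.\ finite random time $T^*$ after which $\|\widehat{P}_t - P\|_\infty < \epsilon$. For $t\geq T^*$, the rule $\Psi(\widehat{P}_t)$ essentially coincides with the oracle $\Psi(P)$ except on a $O(\epsilon)$-neighborhood of the ties in the two argmaxes defining $\Psi$; combined with the fact, extracted from the proof of Proposition~\ref{stability}, that the oracle $\Psi$ applied to arrival rate $\bm\lambda/(1-\gamma)$ admits a strictly negative drift of magnitude $\Theta(\|\bm Q\|)$ outside a bounded region, and using the average $(1-\gamma)$ weight on data slots, I would derive
\begin{align*}
\mathbb{E}[\Delta L(t) \mid \bm Q[t], \widehat{P}_t] \leq B' - \delta \sum_i Q_i[t] \quad \text{for all } t \geq T^*,
\end{align*}
for some constants $B',\delta>0$. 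A Foster--Lyapunov argument applied from the (a.s.\ finite) time $T^*$ onward, together with the a.s.\ finiteness of $\bm Q[T^*]$, then yields the existence of limiting stationary queue distributions. This establishes the achievability direction $(1-\gamma)\bm\Lambda \subseteq \bm\Lambda'_\gamma$. The matching outer bound follows along the lines of Proposition~\ref{stability region chara}: any $\bm\lambda \notin (1-\gamma)\bm\Lambda$ forces $\bm\lambda/(1-\gamma) \notin \bm\Lambda$, and since only a $(1-\gamma)$ fraction of slots carry data under any $\gamma$-parameterized learning policy of the proposed form, the separating-hyperplane plus positive-drift construction of Proposition~\ref{stability region chara}, rescaled by $1-\gamma$, shows that no such rate vector is stably supportable.

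The main obstacle is passing rigorously from the a.s.\ convergence $\widehat{P}_t \to P$ to a drift bound that is uniform in $t$ beyond $T^*$. Two technical points require care: first, the argmax operations in $\Psi$ are discontinuous at rates (or users) that tie under the true statistics, so $O(\epsilon)$-closeness of $\widehat{P}_t$ to $P$ only controls the scheduling decision when those argmaxes are strict---this is typically addressed by choosing a tie-breaking rule that is robust to small perturbations and by showing that the worst-case loss of instantaneous service from a wrong tie-break is bounded, so that its contribution to the drift vanishes with $\epsilon$; second, since $T^*$ is an unbounded random time, one must verify that $\bm Q[T^*]$ does not cause the stability argument to fail, which is handled by conditioning on the event $\{T^* = n\}$ (a.s.\ partitioning the sample space) and noting that bounded per-slot arrivals ensure $\bm Q[T^*]$ is a.s.\ finite, so the post-$T^*$ Foster--Lyapunov drift suffices to deliver a limiting distribution.
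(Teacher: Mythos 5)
Your proposal is correct and follows essentially the same route as the paper's Appendix E: a quadratic Lyapunov drift argument in which the almost-sure convergence of $\widehat{P}_t$ to $P$ yields an a.s.\ finite time after which the empirical max-weight decision is within $\rho\sum_i Q_i[t]$ of the oracle's (the paper's Lemma 10), so that the drift is strictly negative beyond that time for any $\bm\lambda$ interior to $(1-\gamma)\bm\Lambda$. The only additions are that you explicitly supply the outer bound and flag the argmax tie-breaking and random-stopping-time subtleties, which the paper's proof passes over without comment.
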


\textit{Proof Outline:} The proof proceeds by showing that, under
the proposed joint statistics learning - scheduling policy, the
instantaneous maximal sum of the queue weighted achievable rates,
with sufficient time, can be arbitrarily close to the case when
perfect knowledge of the statistics is available. Details are
provided in Appendix E.
\vspace{8pt}

\begin{figure*}
\centering
\includegraphics[width=3.4in]{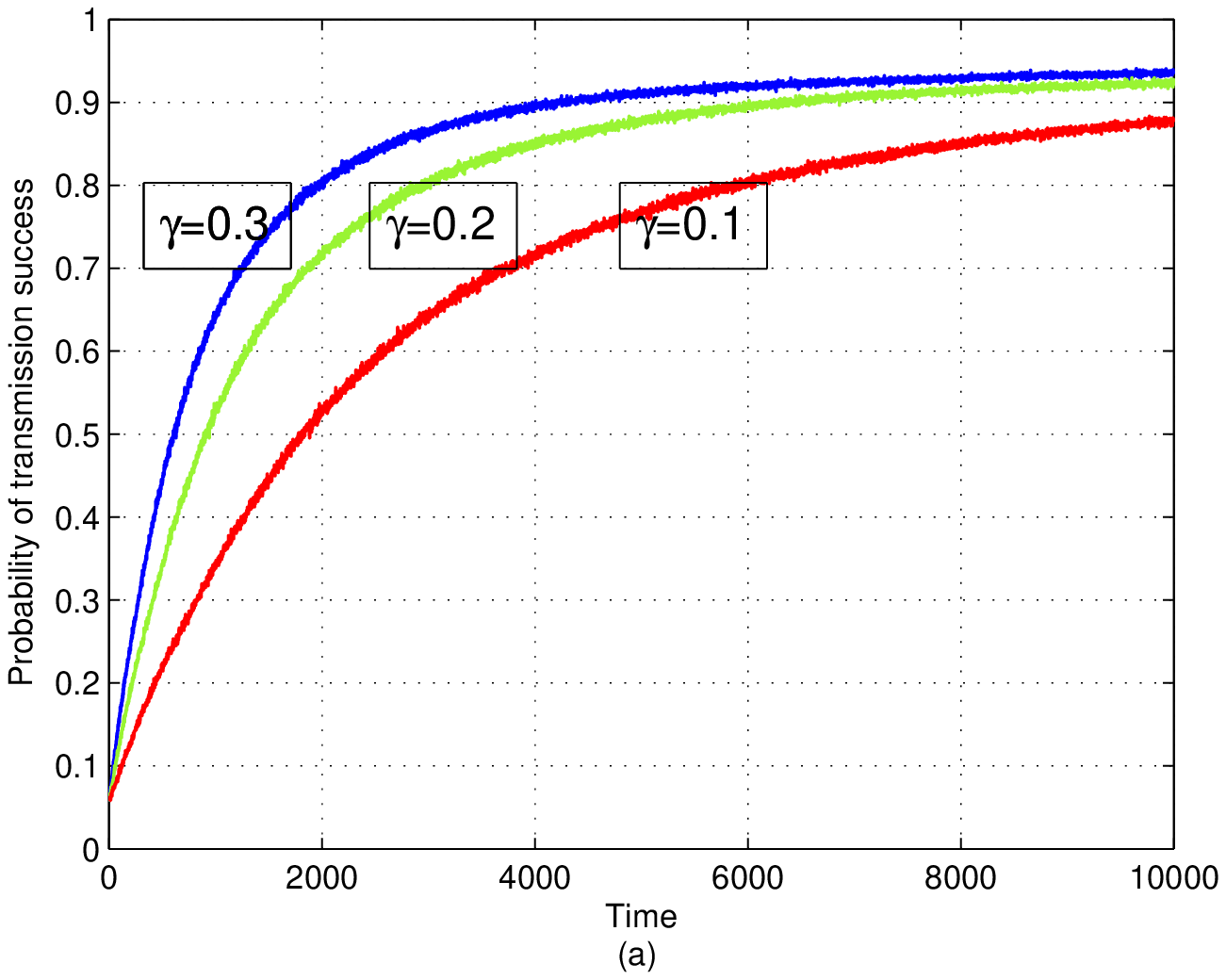}
\includegraphics[width=3.4in]{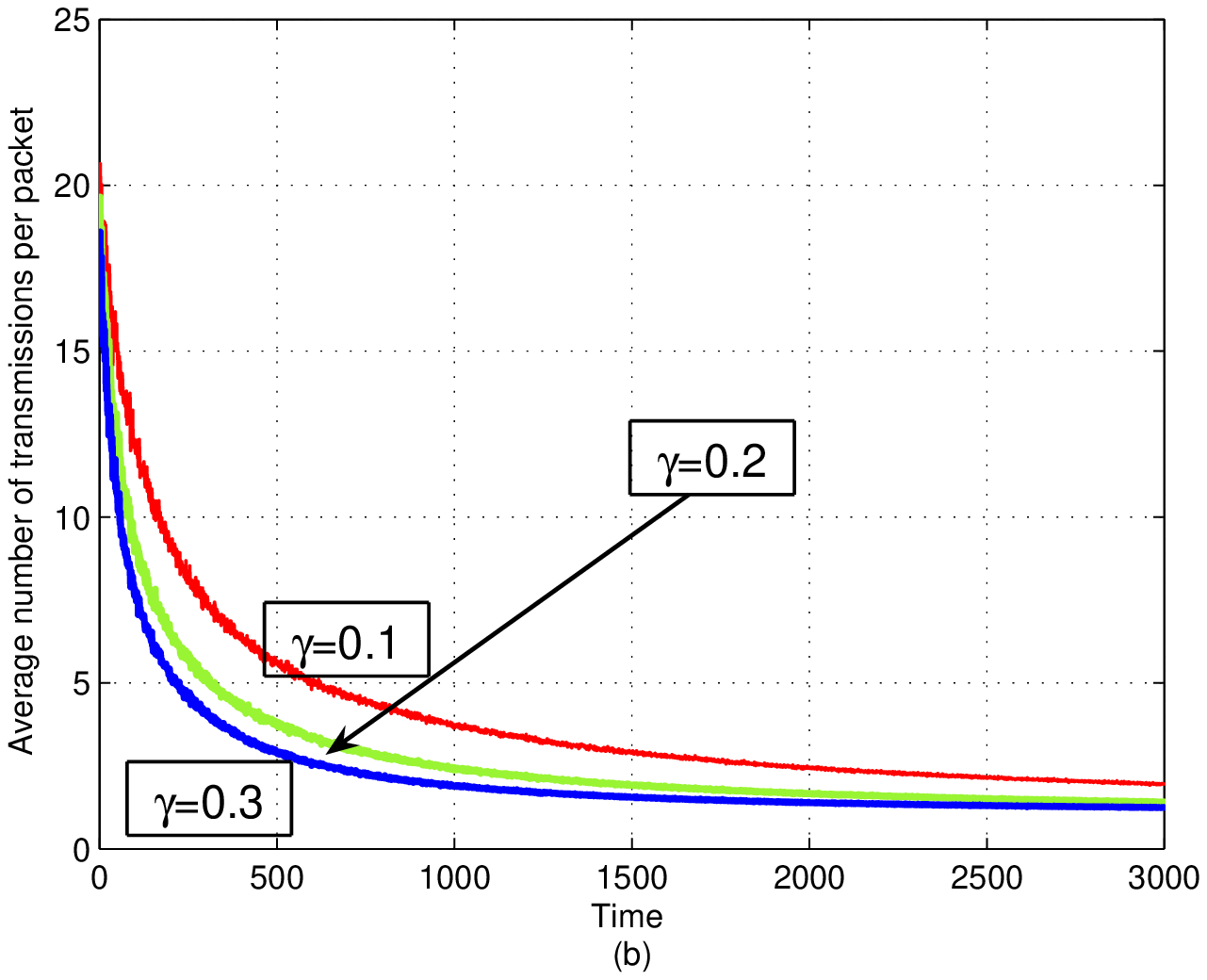}
\caption{Illustration of the time evolution of the probability of
successful packet transmission and average number of transmissions
needed per packet, for various values of $\gamma$.}
\label{fig:trsuc}
\end{figure*}

\subsection{Throughput - Delay Tradeoff}

As $\gamma\rightarrow 0$, the proposed policy has a stability region
that can be arbitrarily close to the system stability region $\bm
\Lambda$. The trade-off involved here is the speed of convergence
and hence queueing delays before convergence. Since an analytical
study of this trade-off appears complicated, we proceed to perform a
numerical study. The simulation setup is described next.

We use \textit{i.i.d.} Rayleigh fading channels with minimum mean
square error (MMSE) channel estimator as seen in \cite{Junshan1} and
\cite{Hassibi}. The channel model is given by
\begin{align}
Y=\sqrt{\rho} h X+\nu, \nonumber
\end{align}
where $X, Y$ correspond to transmitted and received signals,
$\rho$ is the average SNR at the receiver, and $\nu$ is the additive
noise. Both $h$ and $\nu$ are zero-mean complex Gaussian random
variables, i.e., with probability density $\mathcal{CN}(0,1)$. Let
$\hat{h}$ denote the estimate of the channel and $\tilde{h}$ denote
the estimation error. Under the channel statistics assumed,
$\tilde{h}$ is zero-mean complex Gaussian with variance $\beta$,
where the value of $\beta$ depends on the resources allocated for
estimation (\cite{Hassibi2}). Given the value of $h$, the channel rate
is $R=\log (1+ \rho |h|^2)$. We quantize the transmission rate to
make the channel state space to be discrete and finite. We assume a
two-user network and fix $\beta=0.1$ and $\rho=50$ for both users'
channels. We study the average behavior of the proposed policy by
implementing it over $10000$ parallel queuing systems.


We first study the time evolution of the probability of transmission
success for different values of $\gamma$. Fig.~\ref{fig:trsuc}(a)
shows that, for any $\gamma$, the probability of successful
transmission increases as the accuracy of the estimate of the
channel/estimator joint statistics improves with time. Also, as
expected, the larger the value of $\gamma$ is, the faster is the
improvement in the probability of successful transmission. Note that
higher transmission success probability essentially means lesser
number of retransmissions. This is illustrated in
Fig.~\ref{fig:trsuc}(b).

In Fig.~\ref{fig:delay}, we study the time evolution of the average
packet delay - the delay between the time a packet enters the queue
and the time it leaves the head of the queue - for various values of
$\gamma$. Note that $\gamma$ influences the average delay through
(1) the average number of retransmissions and (2) the fraction of
time slots available for transmissions. It is expected that the
nature of the influence of $\gamma$ on the average delay depends on
whether the estimate of the channel/estimator joint statistics has
reached convergence or not. After convergence, the average delay is
influenced by $\gamma$ solely through the fraction of time slots
available for transmissions. Thus, after convergence, the higher the
value of $\gamma$, the higher the average delay. This is illustrated
in Fig.~\ref{fig:delay}. \textit{Before} convergence, however, the
effect of $\gamma$ on the average delay is not straightforward.
Fig.~\ref{fig:delay}, along with the fact that higher $\gamma$
results in faster convergence, suggests the following: before
convergence, $\gamma$ influences the average delay predominantly
through the average number of retransmissions, resulting in
decreasing average delay for increasing $\gamma$. In fact,
Fig.~\ref{fig:delay} suggests the existence of a larger phenomenon:
the trade-off between throughput (the stability region) and the
delay before convergence.

\begin{figure}[!h]
\centering
\includegraphics[width=4.5in]{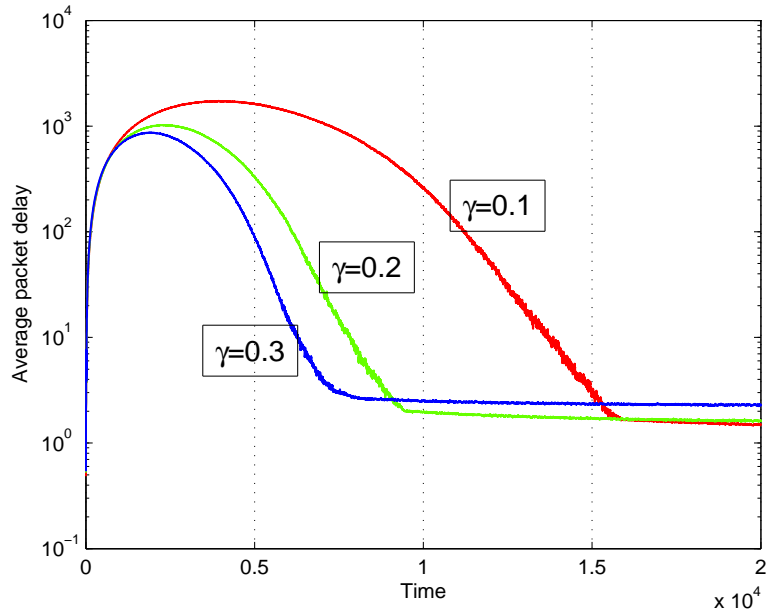}
\caption{Illustration of the average packet delay over time for
various values of $\gamma$.} \label{fig:delay}
\end{figure}

\section{Conclusion}
We studied scheduling with rate adaptation in single-hop queueing
networks, under imperfect channel state information. Under complete
knowledge of the channel/estimator joint statistics at the
scheduler, we characterized the network stability region and
proposed a maximum-weight type scheduling policy that is throughput
optimal. Under incomplete knowledge of the channel/estimator joint
statistics, we designed a joint statistics learning - scheduling
policy that maximizes the worst case rate of convergence of the
statistics learning mechanism. We showed that the proposed policy
can be tuned to achieve a stability region arbitrarily close to the
network stability region with a corresponding trade-off in the
average packet delay before convergence and the time for
convergence.

\appendix
\appendixpage

\section{Proof of Proposition 1}

\begin{proof}
\emph{(Sufficiency)} Define the Lyapunov function $L(\bm
Q[t])=\sum_{i=1}^{N} Q^2_i[t]$. Recall the queue dynamic equation
given by Equation (1), the Lyapunov drift can be written as
\begin{align}
\nonumber \Delta L[t]&=E \Big[ \sum_{i=1}^{N} Q_i^2[t+1]-Q_i^2[t]
\Big | \ \bm Q[t] \ \Big ]\\
& \leq B+ 2\sum_{i=1}^{N} Q_i[t] \Big(\lambda_i - \ E\Big[\bm 1
(I[t]=i) \cdot R[t] \cdot \bm 1(R[t] \leq C_i[t]) \Big | \ \bm Q[t] \ \Big ] \ \Big)
\end{align}
where
\begin{align}
B = \sum_{i=1}^{N} E\big[\bm 1 (I[t]=i) \cdot R^2[t] \bm 1 (R[t]
\leq C_i[t])+ A_i^2[t] \Big | \ \bm Q[t] \big].\nonumber
\end{align}

Noting that $B$ is bounded. Let $r_i^*(\hat{c}_i)=\arg\max_{r_i} P(C_i\geq
r_i \big |\widehat{C}_i=\hat{c}_i\ ) \cdot r_i$. Consider any arrival vector $\bm
\lambda$ strictly within the interior of $\bm \Lambda$. For each
channel state $\hat{\bm c}$, there exist scaling vector $\bm
\alpha^{\hat{\bm c}}$ and $\delta>0$ associated with it such that
\begin{align}
\lambda_i +\delta < \sum_{\hat{\bm c}\in \mathcal{S}^N}
\hat{\pi}_{\hat{\bm c}} \cdot \alpha_i^{\hat{\bm c}} \cdot P(C_i\geq
r_i^*(\hat{c}_i) \big |\widehat{C}_i=\hat{c}_i\ ) \cdot
r_i^*(\hat{c}_i),
\end{align}
for any user $i$,  where $\sum_{i=1}^N \alpha_i^{\hat{\bm c}}=1$ for
$\forall \hat{\bm c}$.\\

Therefore, we can design a scheduling policy that does the
following: At the channel estimation $\widehat{\bm C}[t]=\hat{\bm
c}[t]$, channel $i$ is activated with probability
$\alpha_i^{\hat{\bm c}}$. The rate
allocated to channel $i$ will be $r_i^*(\hat{c}_i[t])$.\\

Then the service rate of user $i$ will be:
\begin{align}
\mu_i =E \big[\bm 1(I[t]=i) \cdot R_i[t] \cdot \bm 1(R_i[t]\leq
C_i[t]) \big] =\sum_{\hat{\bm c}\in \mathcal{S}^N} \pi_{\hat{\bm c}}
\cdot \alpha_i^{\hat{\bm c}} \cdot P(C_i\geq r_i^*(\hat{c}_i)  \big
|\widehat{C}_i=\hat{c}_i\ ) \cdot r_i^*(\hat{c}_i)
\end{align}

Substitute (9) to (8) we have
\begin{align}
\lambda_i-\mu_i < -\delta.
\end{align}

Noting that in this policy the rate adaptation and link activation is completely determined by the channel estimation, and does not
rely on queue length information, and therefore in this case
\begin{align}
E\Big[\bm 1 (I[t]=i) \cdot R[t] \cdot \bm 1(R[t] \leq
C_i[t]) \Big | \ \bm Q[t] \ \Big ] \ \Big) = E\Big[\bm 1 (I[t]=i)
\cdot R[t] \cdot \bm 1(R[t] \leq C_i[t]) \ \Big ] \ \Big) =  \mu_i
\end{align}

Substitute (10) (11) into (7), the Lyapunov drift function now becomes
\begin{align}
\nonumber \Delta L[t]\leq B- 2\delta \sum_{i=1}^{N} Q_i[t].
\end{align}

Because the scheduling and rate adaptation decision only depends on the
current queue length and current channel estimate state, the queue
evolves as a Markov Chain. According to Foster-Lyapunov Stability
criterion \cite{Sean}, the queues will be stable.\\

\emph{(Necessity)} From strict separation theorem \cite{Separation}, for any arrival
rate vector out side the proposed region $\bm \Lambda$, there exist
$\bm \beta$, $\delta
>0$, such that for any vector $\vec{\bm \nu}$ inside the stability region
$\bm \Lambda$
\begin{align}
\sum_{i=1}^{N} \beta_i (\lambda_i - \nu_i) \geq \delta \nonumber
\end{align}

Define Lyapunov function $L(\bm Q[t])=\sum_{i=1}^{N} \beta_i
Q_i[t]$. For any stationary scheduling policy that makes $I[t]$ and $R[t]$ decisions, we will have the following Lyapunov drift expression
\begin{align}
E\Big [L(\bm Q[t+1])- L(\bm Q[t])\Big | \bm Q[t] \Big]=&
\sum_{i=1}^N \beta_i E\Big [A_i[t]- \bm 1(I[t]=i) \cdot \bm
1(R_i[t] \leq C_i[t] ) \cdot
R_i[t] \Big | \bm Q[t] \Big]\nonumber \\
= & \sum_{i=1}^N \beta_i \Big[\lambda_i- E \big [\bm 1(I[t]=i)
\cdot \bm 1(R_i[t] \leq C_i[t] ) \cdot R_i[t] \Big |  \bm Q[t]\big]
\Big]
\end{align}

Let $\mu_i=E \big [\bm 1(I[t]=i) \cdot \bm 1(R_i[t] \leq C_i[t] )
\cdot R_i[t] \Big | \bm Q[t]\big] \Big]$.

Next we are going to show that $\vec{\bm \mu} \in \bm \Lambda$.
Consider
\begin{align}
&E \Big [\bm 1(I[t]=i) \cdot \bm 1(R_i[t] \leq C_i[t] ) \cdot R_i[t]
\Big |  \bm Q[t] \big] \Big] \nonumber \\
=& E \Big [ E \big [\bm 1(I[t]=i) \cdot \bm 1(R_i[t] \leq C_i[t] )
\cdot R_i[t]
\Big |  \bm Q[t]; \widehat{\bm C}[t] \big] \Big] \nonumber \\
= &\mathop{ \sum}_{\hat{\bm c_t} \in \mathcal{S}^{N}} \pi_{\hat{\bm
c}_t} \cdot E \Big [\bm 1(I[t]=i) \cdot \bm 1(R_i[t] \leq C_i[t] )
\cdot R_i[t] \Big |\bm Q[t]; \widehat{\bm C}[t]=\hat{\bm c}_t
\Big] \nonumber\\
=& \sum_{\hat{\bm c} \in \mathcal{S}^{N}} \pi_{\hat{\bm c}} \cdot
\bm 1(I[t]=i) \cdot R_i[t] \cdot Pr \big ( R_i[t] \leq C_i[t] \Big |
\widehat{C}_i[t]=\hat{c} \big) \nonumber\\
\leq & \sum_{\hat{\bm c}\in \mathcal{S}^N} \pi_{\hat{\bm c}} \cdot
\bm 1(I[t]=i) \cdot r_i^*(\hat{c}_i[t]) \cdot Pr \big (
r_i^*(\hat{c}_i[t]) \leq C_i[t] \Big | \widehat{C}_i[t]=\hat{c}_i[t]
\big). \nonumber
\end{align}
The third equality holds because $R[t]$ and $I[t]$ is determined by
the current channel estimation and queue length information within the class G of stationary policies, and
also the \emph{i.i.d.} channel assumption. The above expression indicates that
$\vec{\bm \mu} \in \bm \Lambda$. Hence from (12)
\begin{align} E\Big [L(\bm Q[t+1])-
L(\bm Q[t])\Big | \bm Q[t]\Big] &= \sum_{i=1}^N \beta_i
\Big[\lambda_i- E \big [\bm 1(I[t]=i) \cdot \bm 1(R_i[t] \leq C_i[t]
) \cdot R_i[t] \Big | \bm
Q[t] \big] \Big] \nonumber \\
& \geq \delta \nonumber
\end{align}

The Lyapunov function will always have a positive drift and
therefore, the queue is unstable.
\end{proof}

\section{Proof of Proposition 2}
\begin{proof}
Assume the arrival rate vector $\bm {\lambda}$ is strictly within the
interior of stability region, there exists $\varepsilon>0$ such that
$\bm {\lambda}+\varepsilon \vec{\bm 1} \in int (\bm \Lambda)$.
Because $\bm {\lambda}$ is strictly within the stability region,
similar to the proof of Proposition 1, there exists some randomized scheduling
policy $G_0$ that stably supports the arrival rate vector $\bm
{\lambda}+\varepsilon \vec{\bm 1}$, and that $G_0$ will only depends
on the estimated channel state.

Suppose the proposed scheduling policy $\Psi$ will result in rate
allocation $R_i[t]$ and scheduling decision $I[t]$ at time $t$.
Consider the policy $G_0$ that act at the same time $t$ with the
same channel state estimate and queue lengths knowledge, we denote
its rate allocation to be $\widetilde{R}_i[t]$ and link activation
decision to be $\widetilde{I}[t]$, therefore we have
\begin{align}
& \sum_{i=1}^{N} Q_i[t] \cdot E\Big[\bm 1 (\widetilde{I}[t]=i) \cdot
\widetilde{R}[t] \cdot \bm 1(\widetilde{R}[t] \leq C_i[t]) \ \Big ] \nonumber \\
= &\sum_{i=1}^{N} Q_i[t] \cdot E\Big[ E\Big[\bm 1
(\widetilde{I}[t]=i) \cdot \widetilde{R}[t] \cdot \bm
1(\widetilde{R}[t] \leq C_i[t])\Big | \ \bm Q[t], \
\widehat{\bm C}[t] \Big ] \ \Big ] \nonumber \\
\leq & \sum_{i=1}^{N} Q_i[t] \cdot E\Big[ E\Big[\bm 1 (I[t]=i) \cdot
R[t] \cdot \bm 1(R[t] \leq C_i[t])\Big | \ \bm Q[t], \ \widehat{\bm
C}[t] \Big ]  \ \Big ]. \nonumber
\end{align}
The last inequality holds because policy $\Psi$ maximizes the left
hand side of the above inequality at every time slot. Also because
the queue of each user is stable under policy $G_0$, we have
\begin{align}
\lambda_i+\varepsilon & \leq E\Big[\bm 1 (\widetilde{I}[t]=i) \cdot
\widetilde{R}_{i}[t] \cdot \bm 1(\widetilde{R}_{i}[t]\leq
C_i[t])\Big] \nonumber
\end{align}

And therefore
\begin{align}
\sum_{i=1}^{N} Q_i[t] \cdot (\lambda_i+\varepsilon) & \leq
\sum_{i=1}^{N} Q_i[t] \cdot E\Big[\bm 1 (\widetilde{I}[t]=i) \cdot
\widetilde{R}_{i}[t] \cdot \bm
1(\widetilde{R}_{i}[t]\leq C_i[t])\Big]\nonumber \\
&\leq \sum_{i=1}^{N} Q_i[t] \cdot E\Big[ E\Big[\bm 1 (I[t]=i) \cdot
R[t] \cdot \bm 1(R[t] \leq C_i[t]) \Big | \ \bm Q[t], \ \widehat{\bm
C}[t] \Big ] \ \Big ] \nonumber
\end{align}

Substitute it back to Lyapunov drift expression (7), then we will have:
\begin{align}
\nonumber \Delta L[t]&\leq B+ 2\sum_{i=1}^{N} Q_i[t] \Big(\lambda_i
-E\Big[ E\Big[\bm 1 (I[t]=i) \cdot R[t] \cdot \bm 1(R[t] \leq
C_i[t]) \Big | \ \bm Q[t], \ \widehat{\bm C}[t] \Big ] \ \Big ]
\Big)\\
& \leq B- 2 \varepsilon \sum_{i=1}^{N} Q_i[t] \nonumber
\end{align}

Because scheduling policy $\Psi$ only depends on current queue
length and channel estimate, and because the channel process is a
\emph{i.i.d.} across time, the queue evolution under policy $\Psi$ will be a Markov
Chain. From Foster-Lyapunov criterion, the statement is proven.
\end{proof}

\section{Proof of Stability Region Without Rate Adaptation}
\begin{proof}
The proof of the statement is somehow similar to proof of proposition 1.\\

\emph{(Sufficiency)} Define the Lyapunov function $L(\bm
Q[t])=\sum_{i=1}^{N} Q^2_i[t]$, the Lyapunov drift can be written as
Equation (7). For any arrival vector $\bm \lambda$ strictly within
the interior of $\tilde{\bm \Lambda}$ and each channel state
$\hat{\bm c}$, the vector $\bm \alpha^{\hat{\bm c}}$ and $\delta>0$
satisfies
\begin{align}
\lambda_i +\delta < \sum_{\hat{\bm c}\in \mathcal{S}^N}
\hat{\pi}_{\hat{\bm c}} \cdot \alpha_i^{\hat{\bm c}} \cdot P(C_i\geq
\hat{c}_i \big |\widehat{C}_i=\hat{c}_i\ ) \cdot \hat{c}_i,
\nonumber
\end{align}
for any user $i$,  where $\sum_{i=1}^N \alpha_i^{\hat{\bm c}}=1$ for
$\forall \hat{\bm c}$.\\

The rest of the proof follows similar as in Proposition 1.

\emph{(Necessity)} Similar to the proof of Proposition 1, for any
arrival rate vector out side  $\tilde{\bm \Lambda}$, there exist
$\bm \beta$, $\delta
>0$, such that for any vector $\vec{\bm \nu}$ inside the stability region
$\tilde{\bm \Lambda}$, we have $\sum_{i=1}^{N} \beta_i (\lambda_i -
\nu_i) \geq \delta$.

Define Lyapunov function $L(\bm Q[t])=\sum_{i=1}^{N} \beta_i
Q_i[t]$, for any stationary policy that makes scheduling decision $I[t]$ and rate adaptation decision $R[t]$, again we will have the similar Lyapunov drift expression as
in Equation (12),
\begin{align}
E\Big [L(\bm Q[t+1])- L(\bm Q[t])\Big | \bm Q[t] \Big] =
\sum_{i=1}^N \beta_i \cdot \Big[\lambda_i- E \big [\bm 1(I[t]=i)
\cdot \bm 1(\widehat{C}_i[t] \leq C_i[t] ) \cdot \widehat{C}_i[t]
\Big |  \bm Q[t]\big] \Big] \nonumber
\end{align}

Let $\mu_i=E \big [\bm 1(I[t]=i) \cdot \bm 1(\widehat{C}_i[t] \leq
C_i[t] ) \cdot \widehat{C}_i[t] \Big | \bm Q[t]\big] \Big]$. We can
show $\vec{\bm \mu} \in \tilde{\bm \Lambda}$ from
\begin{align}
E \Big [\bm 1(I[t]=i) \cdot \bm 1(\widehat{C}_i[t] \leq C_i[t] )
\cdot \widehat{C}_i \Big |  \bm Q[t] \big] \Big] =& E \Big [ E \big
[\bm 1(I[t]=i) \cdot \bm 1(\widehat{C}_i[t] \leq C_i[t] ) \cdot
\widehat{C}_i[t]
\Big |  \bm Q[t]; \widehat{\bm C}[t] \big] \Big] \nonumber \\
=& \sum_{\hat{\bm c} \in \mathcal{S}^{N}} \pi_{\hat{\bm c}}
\cdot \bm 1(I[t]=i) \cdot \hat{c}_i \cdot Pr \big (\hat{c}_i \leq C_i[t] \Big |
\widehat{C}_i[t]=\hat{c}_i \big) \nonumber\\
= & \sum_{\hat{\bm c}\in \mathcal{S}^N} \pi_{\hat{\bm c}} \cdot
\bm 1(I[t]=i) \cdot \hat{c}_i \cdot Pr \big (
\hat{c}_i \leq C_i[t] \Big | \widehat{C}_i[t]=\hat{c}_i
\big). \nonumber
\end{align}

The rest follows similarly as in proof of Proposition 1.
\end{proof}

\section{Proof of Proposition 4}

For notational convenience, we drop user index $i$ in the proof.
The optimization problem (U) can be re-written as
\begin{align}
\nonumber \max_{x_{\hat{c}}} \quad \min_{\hat{c}} \quad &Pr(\widehat{C}=\hat{c})x_{\hat{c}} \\\
\nonumber s.t. \hspace{0.02\textwidth} \qquad &\sum_{\hat{c} \in
\mathcal{S}} Pr(\widehat{C}=\hat{c})x_{\hat{c}}=\gamma \\
 &\sum_{\hat{c}\in \mathcal{S}}
Pr(\widehat{C}=\hat{c})=1\nonumber  \\
\nonumber &0< x_{\hat{c}} \leq 1.
\end{align}

This problem can be transformed into a Linear Programming problem as
the following.
\begin{align}
\nonumber \max \quad & t  \\
\nonumber s.t. \quad &\Pr(\widehat{C}=\hat{c})x_{\hat{c}} \geq t\\
&\sum_{\hat{c} \in
\mathcal{S}} Pr(\widehat{C}=\hat{c})x_{\hat{c}}=\gamma \nonumber \\
&\sum_{\hat{c}\in \mathcal{S}}
Pr(\widehat{C}=\hat{c})=1 \nonumber \\
\nonumber &0 <x_{\hat{c}} \leq 1. \nonumber
\end{align}
Hence the problem has become a convex optimization
problem. We let $[t^*, x_{s_1}^*, x_{s_2}^*, \cdots ,x_{s_{|\mathcal{S}|}}^*]$
be the optimal solution to the above problem and let $\Omega=\{ s_k: \
x_{s_k}^*=1 \}$.

\begin{lemma}
The optimal solution to the optimization problem (U) must satisfy the following structural properties:

(i)\ \ If \ $\Omega=\emptyset$ then $Pr(\widehat{C}=s_k)x_{s_k}^*=\gamma/|\mathcal{S}|$ for all $k$.

(ii) Conversely, if $Pr(\widehat{C}=s_k)x_{s_k}^*=\frac{\gamma}{|\mathcal{S}|}$ for all
$k$, then $\Omega=\emptyset$, except for when $Pr(\widehat{C}=s_1)=\frac{\gamma}{|\mathcal{S}|}$.
\end{lemma}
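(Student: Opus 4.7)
The plan is to work with the LP formulation already recorded in the appendix, where $t^* = \min_k \Pr(\widehat{C}=s_k) x_{s_k}^*$. The single observation that drives both parts is that the equality constraint $\sum_k \Pr(\widehat{C}=s_k) x_{s_k}^* = \gamma$ forces the average of the quantities $\Pr(\widehat{C}=s_k) x_{s_k}^*$ to be exactly $\gamma/|\mathcal{S}|$, hence $t^* \leq \gamma/|\mathcal{S}|$, with equality if and only if $\Pr(\widehat{C}=s_k) x_{s_k}^* = \gamma/|\mathcal{S}|$ for every $k$.

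For part (i), I would argue by contradiction via a perturbation/swap. Suppose $\Omega = \emptyset$ but the quantities $\Pr(\widehat{C}=s_k) x_{s_k}^*$ are not all equal to $\gamma/|\mathcal{S}|$. Pick $j$ achieving the minimum $m < \gamma/|\mathcal{S}|$ and some $k$ with $\Pr(\widehat{C}=s_k) x_{s_k}^* > \gamma/|\mathcal{S}|$ (guaranteed to exist by the average-equals-$\gamma/|\mathcal{S}|$ fact). Decrease $x_{s_k}^*$ by $\varepsilon/\Pr(\widehat{C}=s_k)$ and increase $x_{s_j}^*$ by $\varepsilon/\Pr(\widehat{C}=s_j)$. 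Because $\Omega=\emptyset$ gives $x_{s_j}^* < 1$ strictly, this perturbation remains feasible for all sufficiently small $\varepsilon>0$, and it preserves the equality constraint by construction. Choosing $\varepsilon$ below both the gap $\Pr(\widehat{C}=s_k)x_{s_k}^* - \gamma/|\mathcal{S}|$ and the gap between $m$ and the second-smallest value, the value at index $j$ strictly increases while no other index falls to the old minimum level, so the objective $t$ strictly improves, contradicting optimality.

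For part (ii), substitute the hypothesis into the box constraint: $\Pr(\widehat{C}=s_k)x_{s_k}^* = \gamma/|\mathcal{S}|$ forces $x_{s_k}^* = \gamma/(|\mathcal{S}|\Pr(\widehat{C}=s_k))$, and feasibility $x_{s_k}^* \leq 1$ requires $\Pr(\widehat{C}=s_k) \geq \gamma/|\mathcal{S}|$ for each $k$. Using the ordering $\Pr(\widehat{C}=s_1) \leq \cdots \leq \Pr(\widehat{C}=s_{|\mathcal{S}|})$, the box constraint can be binding (i.e., $x_{s_k}^*=1$) only at indices where $\Pr(\widehat{C}=s_k)=\gamma/|\mathcal{S}|$; since $s_1$ is the smallest, any other $s_k$ with $k>1$ being binding would force $\Pr(\widehat{C}=s_1) < \gamma/|\mathcal{S}|$ and violate feasibility. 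Hence $\Omega=\emptyset$, except in the boundary case $\Pr(\widehat{C}=s_1) = \gamma/|\mathcal{S}|$ where $s_1$ itself sits on the boundary.

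The main obstacle I anticipate is making the perturbation in part (i) clean when multiple indices simultaneously attain the minimum. In that situation, I would switch to a set-wise perturbation: move a small mass $\varepsilon$ (properly weighted by the $\Pr(\widehat{C}=s_k)$ factors) from the set of surplus indices, where $\Pr(\widehat{C}=s_k)x_{s_k}^* > \gamma/|\mathcal{S}|$, to the set of all minimizers, so that every minimizer strictly rises while the equality constraint is preserved. Verifying that for small enough $\varepsilon$ no surplus index drops to or below the new minimum level is the one detail requiring care, and uses only the finite gap between $\gamma/|\mathcal{S}|$ and $m$ together with $\Omega=\emptyset$; the rest of the argument is routine.
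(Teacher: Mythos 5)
Your proposal is correct and follows essentially the same route as the paper: part (i) is the same mass-preserving perturbation from a surplus index onto the (possibly multiple) minimizers, which the paper likewise carries out set-wise over the minimizer set $\Pi$, and part (ii) is the same feasibility-plus-ordering argument that isolates the boundary case $\Pr(\widehat{C}=s_1)=\gamma/|\mathcal{S}|$. (One clause in your part (ii) should say that a binding index $k>1$ forces $\Pr(\widehat{C}=s_1)\leq\gamma/|\mathcal{S}|$, which combined with feasibility yields equality --- i.e., exactly the stated exception --- rather than a strict violation of feasibility.)
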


\begin{proof}
(i). Suppose $\Omega=\emptyset$ but we don't have
$Pr(\widehat{C}=s_k)x_{s_k}^*=\frac{\gamma}{s}$ for all $k$. Let $n$ be such that
$t^*=Pr(\widehat{C}=s_n)x_{s_n}^*$. Because
$Pr(\widehat{C}=s_k)x_{s_k}^*$ values are not equal for every $k$, there exists
$m\neq n$ such that $t^*=Pr(\widehat{C}=s_n)x_{s_n}^* <
Pr(\widehat{C}=s_m)x_{s_m}^*$.

Because $\Omega=\emptyset$, $x_{s_n}^*\neq 1$ and $x_{s_m}^*\neq 1$. Let
$\Pi=\{x_{s_k}^*: t^*=Pr(\widehat{C}=s_k)x_{s_k}^* \}$ and let
$|\Pi|=\pi$. If we set
\begin{align}
x_{s_l}^+&=x_{s_l}^*+\frac{\delta}{\pi
Pr(\widehat{C}=s_l)}, \ \forall l \in \Pi \nonumber \\
x_{s_m}^-&=x_{s_m}^*-\frac{\delta}{Pr(\widehat{C}=s_m)} \nonumber
\end{align}
where $\delta>0$ is small that will guarantee that $x_{s_m}^-$ stays
positive.

We can check that in this case, still $\mathop{\sum}_{\hat{c}_i\neq
m \atop \hat{c}_i \not \in
\Pi}Pr(\widehat{C}{=}\hat{c_i})x_{s_k}^*{+}\sum_{\hat{c}_l \in
\Pi}Pr(\widehat{C}{=}\hat{c_l})x_{s_l}^+ {+} Pr(\widehat{C}=\hat{c_m})
x_{s_m}^- {=}\gamma$.

But in this case the new value of the objective function $t_{new}^*>t^*$, contradicts
to the assumption that $t^*$ is the optimal value. Therefore we must have
$Pr(\widehat{C}=s_k)x_{s_k}^*=\frac{\gamma}{|\mathcal{S}|}, \forall
i$, establishing the proof of (i).\\

(ii). When
$Pr(\widehat{C}=s_k)x_{s_k}^*=\frac{\gamma}{|\mathcal{S}|}, \forall
k$, and $x_{s_k}^*< 1$ for all $k$,  we will have $\Omega=\emptyset$.

If $\exists h, Pr(\widehat{C}=s_h)=\gamma/|\mathcal{S}|$, hence $x_{s_h}^*=1$. By assumption
\begin{align}
Pr(\widehat{C}=s_1)x_{s_1}^*=Pr(\widehat{C}=s_h)x_{s_h}^*=Pr(\widehat{C}=s_h). \nonumber
\end{align}

Because $Pr(\widehat{C}=s_1)\leq Pr(\widehat{C}=s_h)$ and $x_{s_1}^*\leq 1$, we must have $Pr(\widehat{C}=s_1)= Pr(\widehat{C}=s_h)$ and $x_{s_1}^*=1$, establishing the proof of (ii).
\end{proof}

\begin{lemma}
If $\Omega \neq \emptyset$, then $x_{s_1}^*=1$ and
$t^*=Pr(\widehat{C}=s_1)$.
\end{lemma}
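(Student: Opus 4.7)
The plan is to prove both conclusions via feasibility-preserving perturbations, in the same spirit as the proof of Lemma~3. Let $\Pi := \{k : \Pr(\widehat{C}=s_k)x_{s_k}^* = t^*\}$ denote the set of bottleneck indices. The overarching strategy for each conclusion is: assuming it fails, I will locate a ``donor'' coordinate whose value strictly exceeds $t^*$, shift an infinitesimal amount of the $\gamma$-budget from that donor onto every bottleneck coordinate, and verify that the perturbed vector is still feasible while strictly raising $\min_k \Pr(\widehat{C}=s_k)x_{s_k}^*$, contradicting optimality of $t^*$.

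For the first claim ($x_{s_1}^*=1$), I would assume toward a contradiction that $x_{s_1}^*<1$ and pick any $s_m\in\Omega$; necessarily $m\geq 2$. Using the ordering $\Pr(\widehat{C}=s_1)\leq\Pr(\widehat{C}=s_m)$ with $x_{s_1}^*<1$, I would establish the strict chain
\[
\Pr(\widehat{C}=s_m)x_{s_m}^* = \Pr(\widehat{C}=s_m) \geq \Pr(\widehat{C}=s_1) > \Pr(\widehat{C}=s_1)x_{s_1}^* \geq t^*,
\]
so $s_m$ is a valid donor, lying strictly above the bottleneck. I would then check that every $k\in\Pi$ has $x_{s_k}^*<1$: if $x_{s_k}^*=1$ for some $k\in\Pi$, then $t^*=\Pr(\widehat{C}=s_k)\geq\Pr(\widehat{C}=s_1)>\Pr(\widehat{C}=s_1)x_{s_1}^*\geq t^*$, absurd. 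With this structural observation in hand, decreasing $x_{s_m}^*$ by a small $\delta>0$ releases $\delta\cdot\Pr(\widehat{C}=s_m)$ of budget, which I distribute among $\{x_{s_k}^*:k\in\Pi\}$; for $\delta$ small enough, every coordinate stays in $(0,1]$, the sum constraint is preserved, and the minimum strictly exceeds $t^*$.

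For the second claim ($t^*=\Pr(\widehat{C}=s_1)$), having just proved $x_{s_1}^*=1$, I get $t^*\leq\Pr(\widehat{C}=s_1)$ for free. Assuming strict inequality, $s_1\notin\Pi$, and every $k\in\Pi$ again satisfies $x_{s_k}^*<1$ because $\Pr(\widehat{C}=s_k)\geq\Pr(\widehat{C}=s_1)>t^*=\Pr(\widehat{C}=s_k)x_{s_k}^*$. The identical perturbation template now applies with $s_1$ playing the donor role: decrease $x_{s_1}^*$ below $1$ by a tiny $\delta$ and distribute the freed budget across the bottleneck coordinates, preserving feasibility and strictly raising the minimum—contradiction. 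Hence $t^*=\Pr(\widehat{C}=s_1)$.

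The hard part is the bookkeeping in the perturbation step: I must simultaneously guarantee that (i) every $x_{s_k}$ stays in $(0,1]$, (ii) the equality constraint $\sum_{\hat c}\Pr(\widehat{C}=\hat c)x_{\hat c}=\gamma$ is preserved, and (iii) the minimum strictly improves. Items (ii) and (iii) reduce to balancing the transferred budget and exploiting the strict gap between the donor value and $t^*$, but item (i) hinges on the key structural observation—in both parts—that under the contradiction hypothesis no bottleneck index $k\in\Pi$ saturates the upper bound $x_{s_k}^*=1$, so there is headroom to raise the bottleneck coordinates. Once this observation is secured, the remainder of the argument is a routine continuity-in-$\delta$ check.
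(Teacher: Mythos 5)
Your proposal is correct and follows essentially the same perturbation/exchange argument as the paper: assume the conclusion fails, transfer an infinitesimal amount of the $\gamma$-budget from a coordinate whose value lies strictly above $t^*$ onto the bottleneck coordinate(s), and contradict the optimality of $t^*$. If anything, your version is slightly more careful than the paper's, which perturbs only two coordinates after assuming ``without loss of generality'' that the bottleneck state is unique, whereas you distribute the freed budget over the entire bottleneck set $\Pi$ and explicitly verify the headroom condition $x_{s_k}^*<1$ for every $k\in\Pi$.
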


\begin{proof}
We proof this Lemma by contradiction. Suppose that $x_{s_1}^*<1$.

(Case 1). If $t^*=Pr(\widehat{C}=s_1) x_{s_1}^*$, without loss of
generality, suppose $s_1$ is the only channel state that
results in $Pr(\widehat{C}=s_1) x_{s_1}^*=t^*$. Because $\Omega \neq
\emptyset$, suppose $x_{s_j}^*=1$ for some state $s_j$. If we set
$x_{s_j}^{new}=1-\delta/Pr(\widehat{C}=s_j)$ and
$x_{s_1}^{new}=x_{s_1}^*+\delta/Pr(\widehat{C}=s_1)$, where $\delta$
is small enough, we will get an new value of the objective function strictly larger than
$t^*$ while still satisfy the constraints in (U), which contradicts to the optimality of $[t^*, x_{s_1}^*, x_{s_2}^*, \cdots ,x_{s_\mathcal{S}}^*]$.

(Case 2). If $t^*<Pr(\widehat{C}=s_1) x_{s_1}^*$, suppose
$t^*=Pr(\widehat{C}=s_m) x_{s_m}^*$ for some $s_m$ and assume, with no loss of generality,
it is the only state of this kind. Because $Pr(\widehat{C}=s_m)
x_{s_m}^*<Pr(\widehat{C}=s_1) x_{s_1}^*$ and
$Pr(\widehat{C}=s_m)\geq Pr(\widehat{C}=s_1)$, we have
$x_{s_m}^*<x_{s_1}^*<1$. We can set
$x_{s_m}^{new}=x_{s_m}^*+\delta/Pr(\widehat{C}=s_m)$ and
$x_{s_1}^{new}=x_{s_1}^*-\delta/Pr(\widehat{C}=s_1)$ for $\delta$
small. Again, this change of variables will result in an new objective function value strictly larger than $t^*$, contradicts to the optimality of $t^*$.

Therefore, we have $x_{s_1}^*=1$.

Suppose we have $t^*<Pr(\widehat{C}=s_1) x_{s_1}^*$. Similar to case
2, suppose $t^*=Pr(\widehat{C}=s_m) x_{s_m}^*$ for some $s_m$, and assume such $s_m$
is unique, we will have $x_{s_m}^*<1$ because $Pr(\widehat{C}=s_m)\geq Pr(\widehat{C}=s_1)$. By letting
$x_{s_m}^{new}=1+\delta/Pr(\widehat{C}=s_m)$ and
$x_{s_1}^{new}=1-\delta/Pr(\widehat{C}=s_1)$ with $\delta$ small, we
can get a strictly larger objective function value, contradicting to the optimality of $t^*$. Therefore we must have
$t^*=Pr(\widehat{C}=s_1) x_{s_1}^*$.
\end{proof}

After we have established the above lemmas, we proceed to the proof of Proposition 4.\\

\noindent \textbf{(Proof of Proposition 4)}

\begin{proof}
\quad \\
(Case 1). First consider the case when $Pr(\widehat{C}=s_1) > \frac{\gamma}{|\mathcal{S}|}$. If in this case $\Omega \neq \emptyset$, then
from Lemma 7,
$t^*=Pr(\widehat{C}=s_1)>\frac{\gamma}{|\mathcal{S}|}$. Then
$\sum_{k=1}^{|\mathcal{S}|} Pr(\widehat{C}=s_k)x_{s_k}^* \geq
|\mathcal{S}| \cdot t^* > |\mathcal{S}| \cdot
\frac{\gamma}{|\mathcal{S}|}=\gamma$.  Therefore contradict to the constraint
$\sum_{k=1}^{|\mathcal{S}|} Pr(\widehat{C}=s_k)x_{s_k}^*=\gamma$. So
$\Omega = \emptyset$, from Lemma 6, we have
$Pr(\widehat{C}=s_k)x_{s_k}^*=\frac{\gamma}{|\mathcal{S}|}$ for all
$i$ and
\begin{align}
x_{s_k}^*=\frac{\gamma}{|\mathcal{S}| \cdot Pr(\widehat{C}=s_k)}. \nonumber
\nonumber
\end{align}

\noindent (Case 2). Consider when $\frac{\gamma}{|\mathcal{S}|} =
Pr(\widehat{C}=s_1)$. If $\Omega=\emptyset$, then from Lemma 6 (i),
$x_{s_1}^*=1$, contradict to $\Omega=\emptyset$. So $\Omega \neq \emptyset$ and from Lemma
7, $t^*=Pr(\widehat{C}=s_1)=\frac{\gamma}{|\mathcal{S}|}$. Because
\begin{align}
\sum_{i=1}^{|\mathcal{S}|} Pr(\widehat{C}=s_k)x_{s_k}^* \geq \sum_{i=1}^{|\mathcal{S}|}
t^* = |\mathcal{S}| \cdot \frac{\gamma}{|\mathcal{S}|}=\gamma, \nonumber
\end{align}
we must have $t^*=Pr(\widehat{C}=s_k)x_{s_k}^*$ for all $k$ in order to satisfy the constraint of $\sum_{k=1}^{|\mathcal{S}|} Pr(\widehat{C}=s_k)x_{s_k}^*=\gamma$.
Therefore we still have
\begin{align}
x_{s_k}^*=\frac{\gamma}{|\mathcal{S}| \cdot Pr(\widehat{C}=s_k)}.
\nonumber
\end{align}

It is easy to check here that $0<x_{s_k}^*\leq 1$ in the case 1 and case 2. We hence have justified the step (2) in Proposition 4 when $k=1$.\\

\noindent (Case 3). If we have
$Pr(\widehat{C}=s_1)<\frac{\gamma}{|\mathcal{S}|}$, then we can not
set $Pr(\widehat{C}=s_k)x_{s_k}^*=\frac{\gamma}{|\mathcal{S}|}$ for
all $k$ because $x_{s_1}^* \leq 1$. So from Lemma 6, $\Omega \neq
\emptyset$. From Lemma 7, $x_{s_1}^*=1$ and
$t^*=Pr(\widehat{C}=s_1)$.

Since now we have identified the optimal value $t^*$ of the objective function and $x_{s_1}^*$, we still need to identify the rest of the solution of $x_{s_j}^*$ for $j \neq 1$. Admitting there might be multiple solutions for those $x_{s_j}^*$, we consider the following relaxed optimization problem ($U^+$)
\begin{align}
\nonumber \max \quad & t\\
\nonumber s.t. \quad &\Pr(\widehat{C}=s_k)x_{s_k} \geq t; \ k \neq 1\\
\nonumber &\sum_{k\neq 1} Pr(\widehat{C}=s_k)x_{s_k}=\gamma^+\\
\nonumber & x_{s_1}=1; \ 0 < x_{s_k} \leq 1, k\neq 1. \nonumber
\end{align}
where $\gamma^+=\gamma-Pr(\widehat{C}=s_1)$.

It can be readily verified Lemma 7 and Lemma 8 holds for the above optimization problem with $\gamma$ and $\mathcal{S}$ substituted by $\gamma^+$ and $\mathcal{S} \backslash s_1$, respectively. Let $\hat{x}_{s_j}^*$, $j=1, \cdots, N$ be the optimal solution for the above optimization problem ($U^+$). We proceed to show $\hat{x}_{s_j}^*$ is also optimal solution to optimization problem ($U$), i.e., satisfying all the constraints of ($U$) and will preserve the optimality of $t^*$ of ($U$) identified earlier.

Let $\hat{t}^*$ be the optimal objective function to the problem ($U^+$). To show that optimal solution to ($U^+$) (i.e., $\hat{x}_{s_j}^*$ for $j \neq 1$) preserve the optimality of $t^*$, we must check that $\hat{t}^*\geq t^*$. This is indeed the case and is explained as follows. Let $\widehat{\Omega}$ be the set of estimated channel states $s_h$ such that $\hat{x}_{s_h}^*=1$ in ($U^+$). When $\widehat{\Omega}\neq \emptyset$, from Lemma 7, $\hat{t}^*=Pr(\widehat{C}=s_2)x_{s_2}^*=Pr(\widehat{C}=s_2)\cdot
1  \geq Pr(\widehat{C}=s_1) \cdot 1 \geq Pr(\widehat{C}=s_1) \cdot x_{s_1}^* \geq t^*$. When $\widehat{\Omega}= \emptyset$, from Lemma 6 we have $\forall j \neq 1$,

\begin{align}
\tilde{t}^*=\hat{x}_{s_j}^* Pr(\widehat{C}=s_j)=\frac{\gamma-Pr(\widehat{C}=s_1)}{|\mathcal{S}|-1}> \frac{|\mathcal{S}|
Pr(\widehat{C}=s_1)- Pr(\widehat{C}=s_1)}{
(|\mathcal{S}|-1)}=Pr(\widehat{C}=s_1)=t^* \nonumber
\end{align}
where the inequality is from $\frac{\gamma}{|\mathcal{S}|} > Pr(\widehat{C}=s_1)$ assumed in the beginning of Case 3. So the optimality of $t^*$ is preserved. It is also clear that the constraint $Pr(\widehat{C}=s_1) x_{s_1}^* + \sum_{j=2}^{|\mathcal{S}|} Pr(\widehat{C}=s_j) \hat{x}_{s_j}^* =\gamma$ is satisfied.

We hence face a reduced optimization problem ($U^+$), for which the optimal solution will also be optimal for the original optimization problem (U). Problem ($U^+$) takes the same form of (U) with $\gamma$ and and $\mathcal{S}$ substituted by $\gamma^+$ and $\mathcal{S} \backslash s_1$, respectively. The proposed algorithm solves the reduced optimization problem by conditioning on the reduced settings of (Case 1)-(Case 3). Hence similar proof as in (Case 1)-(Case 3) is also applicable for the iterative algorithm. By doing the proof iteratively, the optimality of the algorithm is proved.
\end{proof}

\section{Proof of Proposition 5}

We need the following lemmas before proving Proposition 4:

\begin{lemma}
For any user $i$ and any observed channel state $\hat{c}_k$, there
exist a constant $T$, such that for any time $t>T$, and any channel
states $r_1$ and $r_2$, if $P(C_i\geq r_1 \big
|\widehat{C}_i=\hat{c}_k )r_1
> P(C_i\geq r_2 \big |\widehat{C}_i=\hat{c}_k )r_2$, then
\begin{align}
\widehat{P}_t(C_i\geq r_1 \big |\widehat{C}_i=\hat{c}_k )r_1
>\widehat{P}_t(C_i\geq r_2 \big |\widehat{C}_i=\hat{c}_k )r_2 \quad almost \ surely. \nonumber
\end{align}

\end{lemma}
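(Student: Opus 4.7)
The plan is to leverage the strong law of large numbers applied to the empirical estimators, combined with the finiteness of the channel state space $\mathcal{S}$, to turn almost-sure convergence into a uniform threshold time after which the ordering of the queue-weighted rates is preserved.

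First, I would note that for each fixed rate $r\in\mathcal{S}$ and observed estimate $\hat{c}_k$, the estimator $\widehat{P}_t(C_i\geq r\,|\,\widehat{C}_i=\hat{c}_k)$ is an empirical average over the subset of exploration slots in which user $i$ is observed with estimate $\hat{c}_k$ and probed at rate $r$. Since user $i$ is probed at estimate $\hat{c}_k$ a strictly positive long-run fraction $\eta_{i,\hat{c}_k}>0$ of the time, and within those exploration events the rate $r$ is drawn uniformly from $\mathcal{S}$, the counter $N_r[t]$ forms a renewal process with finite mean inter-renewal time $|\mathcal{S}|/\eta_{i,\hat{c}_k}$. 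By the strong law of large numbers, $N_r[t]\to\infty$ almost surely, and $\widehat{P}_t(C_i\geq r\,|\,\widehat{C}_i=\hat{c}_k)\to P(C_i\geq r\,|\,\widehat{C}_i=\hat{c}_k)$ almost surely for every $r\in\mathcal{S}$.

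Next, because $\mathcal{S}$ is finite, there are only finitely many ordered pairs $(r_1,r_2)\in\mathcal{S}^2$ with $P(C_i\geq r_1\,|\,\widehat{C}_i=\hat{c}_k)r_1 > P(C_i\geq r_2\,|\,\widehat{C}_i=\hat{c}_k)r_2$. I would therefore define the strictly positive gap
\begin{align}
\delta=\min_{(r_1,r_2)\in\mathcal{S}^2}\Big\{P(C_i\geq r_1\,|\,\widehat{C}_i=\hat{c}_k)r_1-P(C_i\geq r_2\,|\,\widehat{C}_i=\hat{c}_k)r_2 \ :\ \text{strict inequality holds}\Big\}>0. \nonumber
\end{align}
Let $r_{\max}=\max_{r\in\mathcal{S}} r$. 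By the almost sure convergence noted above applied to each of the $|\mathcal{S}|$ rates, for every $r\in\mathcal{S}$ there exists an almost surely finite (random) time $T_r$ such that $|\widehat{P}_t(C_i\geq r\,|\,\widehat{C}_i=\hat{c}_k)-P(C_i\geq r\,|\,\widehat{C}_i=\hat{c}_k)|<\delta/(3r_{\max})$ for all $t>T_r$. Setting $T=\max_{r\in\mathcal{S}} T_r$ yields an almost surely finite threshold (finite maximum of finitely many almost surely finite quantities).

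Finally, for $t>T$ and any pair $(r_1,r_2)$ satisfying the hypothesis of the lemma,
\begin{align}
\widehat{P}_t(C_i\geq r_1\,|\,\widehat{C}_i=\hat{c}_k)r_1-\widehat{P}_t(C_i\geq r_2\,|\,\widehat{C}_i=\hat{c}_k)r_2 \geq \delta - \tfrac{\delta}{3r_{\max}}r_1-\tfrac{\delta}{3r_{\max}}r_2 \geq \delta-\tfrac{2\delta}{3}>0, \nonumber
\end{align}
which gives the claimed ordering. The only subtlety worth flagging is that $T$ depends on the sample path, but this is exactly what the lemma allows, since the statement is ``for any $t>T$'' with $T$ a constant per realization (almost surely finite); no uniform-in-$\omega$ bound is required. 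The step I would double-check carefully is the renewal-counting argument guaranteeing $N_r[t]\to\infty$ almost surely for every $r$, since that underlies the applicability of the SLLN to each subsequence.
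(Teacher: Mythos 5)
Your proposal is correct and follows essentially the same route as the paper: invoke the strong law of large numbers for the empirical conditional probabilities, exploit the strict gap between the two true weighted rates to choose a tolerance small enough to preserve the ordering, and use finiteness of $\mathcal{S}$ to obtain a single almost surely finite threshold $T$. The only difference is cosmetic — you make the uniformity over all rate pairs explicit by minimizing the gap over the finitely many strictly ordered pairs, whereas the paper fixes a pair and chooses $\delta,\epsilon$ with $\delta r_1+\epsilon r_2<\Delta$ — but the substance of the argument is identical.
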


\begin{proof} Let $\Delta=P(C_i\geq r_1 \big |\widehat{C}_i=\hat{c}_k )r_1 -
P(C_i\geq r_2 \big |\widehat{C}_i=\hat{c}_k )r_2$. Then there
exist $\delta>0$ and $\epsilon>0$ such that $\delta r_1+\epsilon r_2
< \Delta$. From Strong Law of Large Numbers, there exist time $T$
such that for $t>T$,
\begin{align}
\widehat{P}_t(C_i\geq r_1 \big |\widehat{C}_i=\hat{c}_k )
-P(C_i\geq r_1 \big |\widehat{C}_i=\hat{c}_k ) &>-\delta \quad a.s., \nonumber\\
\widehat{P}_t(C_i\geq r_2 \big |\widehat{C}_i=\hat{c}_k ) -P(C_i\geq
r_2 \big |\widehat{C}_i=\hat{c}_k ) &<\epsilon \quad a.s. \nonumber
\end{align}

Hence we have
\begin{align}
& \widehat{P}_t(C_i\geq r_1 \big |\widehat{C}_i=\hat{c}_k )r_1 -
\widehat{P}_t(C_i\geq r_2 \big |\widehat{C}_i=\hat{c}_k )r_2\nonumber \\
> & P(C_i\geq r_1 \big |\widehat{C}_i=\hat{c}_k )r_1-\delta r_1 -
P(C_i\geq r_2 \big |\widehat{C}_i=\hat{c}_k ) r_2 - \epsilon r_2 \nonumber \\
= &P(C_i\geq r_1 \big |\widehat{C}_i=\hat{c}_k )r_1-
P(C_i\geq r_2 \big |\widehat{C}_i=\hat{c}_k ) r_2- (\delta r_1 + \epsilon r_2) \nonumber \\
> & \Delta-\Delta =0 \quad a.s. \nonumber
\end{align}
\end{proof}

Remark: This lemma implies that, there will be a time beyond which
the allocated rate with empirical knowledge of
$\widehat{P}_t(C_i\geq r_1 \big |\widehat{C}_i=\hat{c}_k )$ is the
same as with accurate knowledge. Because both the number of users
and the state space is finite, we will chose the right rate with
probability one as time is large, which is summarized in the
following corollary.

\begin{corollary}
There exist a time $\widetilde{T}$ beyond which, with probability 1, the empirical
scheme will allocate rate $\widehat{R}[t]$ the same as $R[t]$ when
the $P(C[t] | \widehat{C}[t])$ is perfect known.
\end{corollary}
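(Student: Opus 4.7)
The plan is to bootstrap from Lemma 9 by taking a finite union over all the possible comparisons that the rate-adaptation step has to make. Recall that the rate chosen by policy $\Psi$ at user $i$ with estimate $\hat{c}$ is an argmax over $r \in \mathcal{S}$ of $P(C_i \geq r \mid \widehat{C}_i = \hat{c}) \cdot r$, and the corresponding empirical decision replaces $P(\cdot)$ by $\widehat{P}_t(\cdot)$. So what we really need is that, eventually, the empirical weighted-rate ordering on $\mathcal{S}$ agrees with the true ordering for every $(i,\hat{c})$ simultaneously.

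First, I would fix an arbitrary user $i$, estimate $\hat{c}_k \in \mathcal{S}$, and pair of rates $r_1, r_2 \in \mathcal{S}$ for which the strict inequality $P(C_i \geq r_1 \mid \widehat{C}_i = \hat{c}_k)\, r_1 > P(C_i \geq r_2 \mid \widehat{C}_i = \hat{c}_k)\, r_2$ holds. By Lemma 9, there exists a finite (random) time $T_{i,\hat{c}_k,r_1,r_2}$ beyond which the corresponding empirical inequality $\widehat{P}_t(C_i \geq r_1 \mid \widehat{C}_i = \hat{c}_k)\, r_1 > \widehat{P}_t(C_i \geq r_2 \mid \widehat{C}_i = \hat{c}_k)\, r_2$ holds almost surely. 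The number of tuples $(i,\hat{c}_k,r_1,r_2)$ is finite since $N$, $|\mathcal{S}|$, and $|\mathcal{S}|$ are all finite, so I can set $\widetilde{T} = \max_{i,\hat{c}_k,r_1,r_2} T_{i,\hat{c}_k,r_1,r_2}$. A finite maximum of a.s.-finite random times is a.s.\ finite, and a finite intersection of probability-one events is a probability-one event, so beyond $\widetilde{T}$ every strict ordering between weighted rates that held for the true statistics also holds for the empirical ones, almost surely. Hence the empirical argmax $\widehat{R}_i[t]$ coincides with the exact argmax $R_i[t]$ for every user $i$ and every estimate $\hat{c}_i$, which is exactly the claim.

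The only subtlety — and the step I expect to require a small amount of care — is how to handle ties in the true argmax, i.e., rates $r_1 \neq r_2$ with $P(C_i\geq r_1\mid \hat{c}_k) r_1 = P(C_i\geq r_2\mid \hat{c}_k) r_2$. Lemma 9 applies only to strict inequalities, so empirical noise could keep flipping the argmax between the tied rates forever. This is not a real problem, however, because each tied rate yields the same expected per-slot service and is therefore an equally valid optimal choice: after adopting a fixed tie-breaking rule (say the smallest index), one shows that among pairs where the true inequality is strict Lemma 9 applies, and pairs where equality holds do not affect the optimal objective value. Writing the corollary with this convention — equivalently, reading ``$\widehat{R}[t]=R[t]$'' as ``$\widehat{R}[t]$ achieves the optimal conditional expected rate'' — removes the obstacle and makes the finite-union argument above go through cleanly.
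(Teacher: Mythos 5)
Your proof is correct and takes essentially the same approach as the paper: the paper derives the corollary as an immediate consequence of the preceding lemma, invoking (in the remark just before the corollary) exactly the finiteness of the number of users and of the rate/estimate space that underlies your finite-union-of-almost-sure-events argument. Your explicit handling of ties in the argmax is a detail the paper silently omits, and your resolution --- fix a tie-breaking convention and observe that tied rates yield the same conditional expected service, so only the strict comparisons covered by the lemma matter --- is the right way to close that small gap.
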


\begin{proof}
This result is immediate from the previous lemma.
\end{proof}

\begin{lemma}
At those non-observation slot $t$, let $\widehat{I}[t]$ and
$\widehat{R}_i[t]$ be the scheduling decision by the joint statistics learning-scheduling policy. Let $I[t]$ and $R_i[t]$ be the scheduling decision of $\Psi$ with
accurate knowledge.for any $\rho>0$. There exist time $N$ beyond which, with probability 1,
\begin{align}
&\Big |\sum_{j=1}^N Q_i[t] \bm 1(I[t]{=}j) P(C_j[t]{\geq} R_i[t] \big
|\widehat{C}_j[t]{=}\hat{c}_j )R_i[t]{-}\sum_{i=1}^N Q_i[t] \bm
1(\widehat{I}[t]{=}i) P(C_i[t]{\geq} \widehat{R}_i[t] \big
|\widehat{C}_i[t]{=}\hat{c}_i )\widehat{R}_i[t] \big | \nonumber
\\
\leq & \rho \sum_{i=1}^N Q_i[t] \nonumber
\end{align}
\end{lemma}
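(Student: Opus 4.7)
The plan is to combine Corollary~11 with a standard argmax-stability argument, so that the bound reduces to the uniform almost-sure convergence of the empirical conditional probabilities $\widehat{P}_t(C_i\geq r\mid\widehat{C}_i=\hat{c})$. First, by Corollary~11 there exists $\widetilde{T}$ such that for all $t>\widetilde{T}$ the learning policy's rate choice satisfies $\widehat{R}_i[t]=R_i[t]$ almost surely for every user $i$. Under this event, the second sum in the statement simplifies and the task reduces to bounding $|w_{I[t]}-w_{\widehat{I}[t]}|$, where I write $w_i := Q_i[t]\, P(C_i\geq R_i\mid\widehat{C}_i=\hat{c}_i)\, R_i$ and, correspondingly, $\widehat{w}_i := Q_i[t]\, \widehat{P}_t(C_i\geq R_i\mid\widehat{C}_i=\hat{c}_i)\, R_i$. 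Because the number of users, $|\mathcal{S}|$, and the rate alphabet are all finite, applying the Strong Law of Large Numbers separately to each of the finitely many triples $(i,\hat{c},r)$ gives, on a single probability-one event, that $\widehat{P}_t\to P$ uniformly in the triple; hence there is an a.s.-vanishing sequence $\epsilon_t\downarrow 0$ with $|w_i-\widehat{w}_i|\leq \epsilon_t R_{\max} Q_i[t]$ for every $i$, where $R_{\max}:=\max\mathcal{S}$.

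With these preliminaries, the rest is elementary argmax stability. The oracle chooses $I[t]=\argmax_i w_i$ while, by the reduction above, the learning policy chooses $\widehat{I}[t]=\argmax_i \widehat{w}_i$, so a three-term decomposition gives
\begin{align}
0\leq w_{I[t]}-w_{\widehat{I}[t]} &= (w_{I[t]}-\widehat{w}_{I[t]})+(\widehat{w}_{I[t]}-\widehat{w}_{\widehat{I}[t]})+(\widehat{w}_{\widehat{I}[t]}-w_{\widehat{I}[t]}) \nonumber \\
&\leq \epsilon_t R_{\max} Q_{I[t]}[t]+0+\epsilon_t R_{\max} Q_{\widehat{I}[t]}[t]\leq 2\epsilon_t R_{\max}\sum_{i=1}^N Q_i[t], \nonumber
\end{align}
where the middle term in the first line is nonpositive by definition of $\widehat{I}[t]$. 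Choosing the threshold in the lemma to be $\max(\widetilde{T},T_\rho)$ with $T_\rho$ chosen so that $2\epsilon_t R_{\max}<\rho$ almost surely for $t>T_\rho$ then yields the claimed inequality.

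I do not anticipate any serious obstacle. The one place that needs a little care is combining the probability-one events coming from Corollary~11 and from the finite collection of SLLN statements into a single probability-one event on which all of the bounds hold uniformly in $(i,\hat{c},r)$; because the union is finite, this is automatic. After that, the three-term telescoping above is the only real calculation and it is purely algebraic, with the total queue length $\sum_i Q_i[t]$ arising simply from the trivial bound $Q_{I[t]}+Q_{\widehat{I}[t]}\leq 2\sum_i Q_i[t]$.
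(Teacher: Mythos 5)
Your proposal is correct and follows essentially the same route as the paper: it invokes the corollary that the learned and oracle rate choices coincide after a finite time, applies the strong law of large numbers uniformly over the finitely many $(i,\hat c,r)$ triples, and then uses the two argmax inequalities (your three-term telescoping with a nonpositive middle term is exactly what the paper's inequalities (13)--(14) and the subsequent chain accomplish). The only cosmetic differences are that you fold the paper's explicit case split on $I[t]=\widehat I[t]$ versus $I[t]\neq\widehat I[t]$ into a single decomposition and carry an extra factor $2R_{\max}$ that you absorb into the choice of threshold.
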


\begin{proof}
From corollary 9, let $T_1$ be the time beyond which $R[t]=\widehat{R}[t]$ a.s.. We
consider the time $t>T_1$ and thus $R_i[t]=\widehat{R}_i[t]=r_{\hat{c}_i}^*= \arg\max P(C_j[t]{\geq} r \big |\widehat{C}_j[t]{=}\hat{c}_j ) \cdot r$ almost surely for all $i$. From strong law of large numbers, let $T_2\geq T_1$ be such that beyond which $| \widehat{P}_t(C_j{\geq}
r_{\hat{c}_j}^* \big |\widehat{C}_j{=}\hat{c}_j )r_{\hat{c}_j}^* -
P(C_j{\geq} r_{\hat{c}_j}^* \big |\widehat{C}_j{=}\hat{c}_j
)r_{\hat{c}_j}^* | \leq \rho$ almost surely for all $j$. We henceforth consider $t>T_2$. Given queue length $\bm Q[t]$ and estimated channel state information $\widehat{\bm C}_i[t]$, $I[t]$ and $\widehat{I}[t]$ are determined by
\begin{align}
I[t]=&\argmax_i \ Q_i[t] P(C_i\geq r_{\hat{c}_i}^* \big
|\widehat{C}_i=\hat{c}_i
)r_{\hat{c}_i}^* \quad a.s. \nonumber \\
\widehat{I}[t]=&\argmax_i \ Q_i[t] \widehat{P}_t(C_i\geq r_{\hat{c}_i}^*
\big |\widehat{C}_i=\hat{c}_i )r_{\hat{c}_i}^* \quad a.s. \nonumber
\end{align}

If $I[t]=\widehat{I}[t]$, the statement will hold. If $I[t]=h$ but $\widehat{I}[t]=k$ for $h\neq k$, we have
\begin{align}
Q_k[t] \widehat{P}_t(C_k{\geq} r_{\hat{c}_k}^* \Big
|\widehat{C}_k{=}\hat{c}_k)r_{\hat{c}_k}^* &{\geq}  Q_h[t] \widehat{P}_t(C_h{\geq} r_{\hat{c}_h}^* \Big
|\widehat{C}_h{=}\hat{c}_h)r_{\hat{c}_h}^*,\\
Q_h[t] {P}(C_h{\geq} r_{\hat{c}_k}^* \Big
|\widehat{C}_h{=}\hat{c}_h)r_{\hat{c}_h}^* &{\geq}  Q_k[t] {P}(C_k{\geq} r_{\hat{c}_k}^* \Big
|\widehat{C}_k{=}\hat{c}_k)r_{\hat{c}_k}^*.
\end{align}

Because $t\geq T_2$, we further have
\begin{align}
&\sum_{j=1}^N Q_i[t] \bm 1(I[t]{=}j) P(C_j{\geq} R_i[t] \big
|\widehat{C}_j{=}\hat{c}_j )R_i[t]{-}\sum_{i=1}^N Q_i[t] \bm
1(\widehat{I}[t]{=}i) P(C_i{\geq} \widehat{R}_i[t] \big
|\widehat{C}_i{=}\hat{c}_i )\widehat{R}_i[t]\nonumber \\
=&Q_h[t] P(C_h{\geq} r_{c_h}^* \big
|\widehat{C}_h{=}\hat{c}_h)r_{c_h}^* -
Q_k[t] P(C_k{\geq} r_{c_k}^* \big |\widehat{C}_k{=}\hat{c}_k)r_{c_k}^* \nonumber\\
\leq & Q_h[t] [\widehat{P}_t(C_h{\geq} r_{c_h}^* \big
|\widehat{C}_h{=}\hat{c}_h)r_{c_h}^*+\rho ]-
Q_k[t] [\widehat{P}_t(C_k{\geq} r_{c_k}^* \big |\widehat{C}_k{=}\hat{c}_k)r_{c_k}^*-\rho]\nonumber\\
\leq & \rho (Q_h[t]+Q_k[t]) \nonumber
\end{align}
almost surely, where the first inequity is from the assumption that $| \widehat{P}_t(C_j{\geq}
r_{\hat{c}_j}^* \Big |\widehat{C}_j{=}\hat{c}_j )r_{\hat{c}_j}^* -
P(C_j{\geq} r_{\hat{c}_j}^* \Big |\widehat{C}_j{=}\hat{c}_j
)r_{\hat{c}_j}^* | \leq \rho$ almost surely for all $j$, and the last inequality is from (13). Also, we have
\begin{align}
&\sum_{i=1}^N Q_i[t] \bm
1(\widehat{I}[t]{=}i) P(C_i{\geq} \widehat{R}_i[t] \Big
|\widehat{C}_i{=}\hat{c}_i )\widehat{R}_i[t]-\sum_{j=1}^N Q_i[t] \bm 1(I[t]{=}j) P(C_j{\geq} R_i[t] \Big |\widehat{C}_j{=}\hat{c}_j )R_i[t] \nonumber \\
=&Q_k[t] P(C_k{\geq} r_{c_k}^* \Big |\widehat{C}_k{=}\hat{c}_k)r_{c_k}^*-Q_h[t] P(C_h{\geq} r_{c_h}^* \Big |\widehat{C}_h{=}\hat{c}_h)r_{c_h}^* \nonumber\\
\leq & 0 \nonumber\\
\leq & \rho (Q_h[t]+Q_k[t]) \nonumber
\end{align}
almost surely, where the first inequality is from (14). We hence proved the Lemma.
\end{proof}

\noindent \textbf{(Proof of Proposition 5)}
\begin{proof}
We first prove that for any $\bm \lambda$ strictly within $\bm \Lambda_{\gamma}$, i.e., $\bm {\lambda}+\varepsilon \vec{\bm 1} \in int (\bm
\Lambda_{\gamma})$, there exists a policy $G_0$ that makes decision only based on empirical statistics, i.e., $\widehat{P}_t(C_i\geq r | \widehat{C}_i)$ and stably supports $\lambda$. Beccause
\begin{align}
\lambda_i +\varepsilon \leq (1-\gamma)
\sum_{\hat{\bm c}\in \mathcal{S}^N} \hat{\pi}_{\hat{\bm c}} \cdot
\alpha_i^{\hat{\bm c}} \cdot P(C_i\geq r_i^*(\hat{c}_i) \big
|\widehat{C}_i=\hat{c}_i\ ) \cdot r_i^*(\hat{c}_i).
\end{align}

Then the
randomized policy $G_0$ can be: at every non-observing estimated state
$\hat{c}$, activates channel $\widetilde{I}[t]=i$ with probability
$\alpha_i^{\hat{\bm c}}$ with the allocated transmission rate
$\widetilde{R}_i[t]=\arg\max_r \widehat{P}_t(C[t]\geq r | \widehat{C}[t])\cdot r$. Define Lyapunov function $L(\bm Q[t])=\sum_{i=1}^{N} Q^2_i[t]$, similar to (7), the Lyapunov drift can be written as
\begin{align}
\nonumber \Delta L[t]&=E \Big[ \sum_{i=1}^{N} Q_i^2[t+1]-Q_i^2[t]
\Big | \ \bm Q[t] \ \Big ]\\
& \leq B_1+ 2\sum_{i=1}^{N} Q_i[t] \Big(\lambda_i -
(1-\gamma) E\Big[\bm 1
(\widetilde{I}[t]=i) \cdot \widetilde{R}[t] \cdot \bm 1(\widetilde{R}[t] \leq C_i[t]) \Big | \ \bm Q[t] \ \Big ] \ \Big)
\end{align}
where $B_1$ is bounded,
\begin{displaymath}
B_1 = (1-\gamma)^2 \sum_{i=1}^{N} E\big[\bm 1 (\widetilde{I}[t]=i) \cdot
\widetilde{R}^2[t] \bm 1 (\widetilde{R}[t] \leq C_i[t])+ A_i^2[t] \Big |
\ \bm Q[t] \big].
\end{displaymath}

Then from Corollary 9, for $t>N$, $\widetilde{R}[t]=r_i^*(\hat{c}_i)=\arg\max_r P(C_i\geq r
\big |\widehat{C}_i=\hat{c}_i\ ) \cdot r$ almost surely. Hence
\begin{align}
&(1-\gamma)E\Big[\bm 1 (\widetilde{I}[t]=i) \cdot \widetilde{R}[t] \cdot \bm
1(\widetilde{R}[t] \leq C_i[t]) \Big | \ \bm Q[t] \ \Big ] \
\Big) \nonumber\\
=&(1-\gamma)\sum_{\hat{\bm c}\in \mathcal{S}^N} \hat{\pi}_{\hat{\bm
c}} \cdot \alpha_i^{\hat{\bm c}} \cdot P(C_i\geq r_i^*(\hat{c}_i)
\big |\widehat{C}_i=\hat{c}_i\ ) \cdot r_i^*(\hat{c}_i)
\end{align}

Substitute (15) and (17) into equation (16),the Lyapunov drift function will take the form
\begin{align}
\nonumber \Delta L[t]\leq B- 2 \varepsilon \sum_{i=1}^{N} Q_i[t].
\end{align}
Hence Queues will be stable.

We next show that the joint statistics learning-scheduling policy will stabilize $\bm \lambda$ similar to the proof of Proposition 2. Given queue length information $\bm Q[t]$ and estimated channel state $\widehat{C}[t]$, suppose the proposed joint statistics learning-scheduling policy will result in rate adaptation $\widehat{R}[t]$ and scheduling decision $\widehat{I}[t]$ at time $t$, and suppose the policy $\Psi$ with perfect CSI will make rate adaptation decision $R[t]$ and scheduling decision $I[t]$. Associated with the joint statistics learning and scheduling algorithm, the Lyapunov Drift can be written as
\begin{align}
\nonumber \Delta L[t]&=E \Big[ \sum_{i=1}^{N} Q_i^2[t+1]-Q_i^2[t]
\Big | \ \bm Q[t] \ \Big ]\\
& \leq B_2+ 2\sum_{i=1}^{N} Q_i[t] \Big(\lambda_i -
(1-\gamma) E\Big[\bm 1
(\widehat{I}[t]=i) \cdot \widehat{R}[t] \cdot \bm 1(\widehat{R}[t] \leq C_i[t]) \Big | \ \bm Q[t] \ \Big ] \ \Big) \nonumber \\
& = B_2+2\sum_{i=1}^{N} Q_i[t] \Big(\lambda_i -(1-\gamma)
E\Big[ \ E\Big[\bm 1 (\widehat{I}[t]=i) \cdot \widehat{R}[t] \cdot \bm
1(\widehat{R}[t] \leq C_i[t]) \Big | \ \bm Q[t], \ \widehat{\bm
C}[t] \Big ] \ \Big ] \Big)
\end{align}
where $B_2$ is bounded,
\begin{displaymath}
B_2 = (1-\gamma)^2 \sum_{i=1}^{N} E\big[\bm 1 (\widehat{I}[t]=i) \cdot
\widehat{R}^2[t] \bm 1 (\widehat{R}[t] \leq C_i[t])+ A_i^2[t] \Big |
\ \bm Q[t] \big].
\end{displaymath}

Same as in proof of Proposition 2, we have
\begin{align}
\nonumber & \sum_{i=1}^{N} Q_i[t] \cdot E\Big[\bm 1 (\widetilde{I}[t]=i) \cdot
\widetilde{R}[t] \cdot \bm 1(\widetilde{R}[t] \leq C_i[t]) \ \Big ]\\
\nonumber \leq & \sum_{i=1}^{N} Q_i[t] \cdot E\Big[ E\Big[\bm 1 ({I}[t]=i) \cdot
{R}[t] \cdot \bm 1({R}[t] \leq C_i[t])\Big | \ \bm Q[t], \ \widehat{\bm
C}[t] \Big ]  \ \Big].
\end{align}

And therefore for any $0<\rho<\varepsilon/(1-\gamma)$, for $t>\max \{T_1, N\}$, we have
\begin{align}
\sum_{i=1}^{N} Q_i[t] \cdot (\lambda_i+\varepsilon) & \leq
(1-\gamma)\sum_{i=1}^{N} Q_i[t] \cdot E\Big[\bm 1 (\widetilde{I}[t]=i) \cdot
\widetilde{R}_{i}[t] \cdot \bm
1(\widetilde{R}_{i}[t]\leq C_i[t])\Big]\nonumber\\
&\leq (1-\gamma)\sum_{i=1}^{N} Q_i[t] \cdot E\Big[ E\Big[\bm 1 ({I}[t]=i)
\cdot {R}[t] \cdot \bm 1({R}[t] \leq C_i[t]) \Big | \ \bm Q[t], \
\widehat{\bm C}[t]\Big ] \Big ] \nonumber\\
&= (1-\gamma)\sum_{i=1}^{N} Q_i[t] \cdot E\Big[E\Big[\bm 1 ({I}[t]=i)
\cdot \widehat{R}[t] \cdot \bm 1(\widehat{R}[t] \leq C_i[t]) \Big | \ \bm Q[t], \
\widehat{\bm C}[t]\Big ] \Big ]\nonumber\\
&=(1-\gamma)\sum_{i=1}^{N} Q_i[t] \sum_{C_i[t] \in \mathcal{S}} \bm 1 ({I}[t]=i)
\cdot \widehat{R}[t] \cdot P(\widehat{R}[t] \leq C_i[t] \Big | \widehat{C}_i[t])\nonumber\\
&=(1-\gamma)\sum_{C_i[t] \in \mathcal{S}} \sum_{i=1}^{N} Q_i[t]  \bm 1 ({I}[t]=i)
\cdot \widehat{R}[t] \cdot P(\widehat{R}[t] \leq C_i[t] \Big | \widehat{C}_i[t])\nonumber\\
&\leq(1-\gamma)\sum_{C_i[t] \in \mathcal{S}} \sum_{i=1}^{N} Q_i[t]  \Big[\bm 1 ({I}[t]=i)
\cdot \widehat{R}[t] \cdot P(\widehat{R}[t] \leq C_i[t] \Big | \widehat{C}_i[t])+\rho \Big]\nonumber\\
&= (1-\gamma)\sum_{i=1}^{N} Q_i[t] \cdot E\Big[ E\Big[\bm 1 (\widehat{I}[t]=i)
\cdot \widehat{R}[t] \cdot \bm 1(\widehat{R}[t] \leq C_i[t]) \Big | \ \bm Q[t], \
\widehat{\bm C}[t]\Big ] + \rho \Big ]
\end{align}
where the last inequality comes from Lemma 10.
Substitute (19) into the Lyapunov drift expression (18), we will have:
\begin{align}
\nonumber \Delta L[t]& \leq B_2- 2 \big(\varepsilon-\rho(1-\gamma)\big) \sum_{i=1}^{N} Q_i[t]
\end{align}
Since $\varepsilon-\rho(1-\gamma)>0$, the queues will be stable.
\end{proof}


\begin{thebibliography}{1}

\bibitem{backpressure}
L. Tassiulas, A. Ephremides, ``Stability properties of constrained
queueing systems and scheduling policies for maximum throughput in
multihop radio networks,'' \emph{IEEE Transactions on Automatic
Control,} vol. 37, no. 12, pp. 1936-1948, Dec. 1992.

\bibitem{MWM}
L. Tassiulas, A. Ephremides, ``Dynamic server allocation to parallel
queues with randomly varying connectivity,'' \emph{IEEE Transactions
on Information Theory,} vol. 39, no. 2, pp. 466-478, Mar. 1993.

\bibitem{NessLin05}
X. Lin, N. B. Shroff. ``Joint rate control and scheduling in
multihop wireless networks,'' \emph{Proceedings of IEEE Conference
on Decision and Control,} Paradise Island, Bahamas, Dec. 2004.

\bibitem{ImpftSch}
X. Lin, N. B. Shroff, ``The impact of imperfect scheduling on
cross-Layer congestion control in wireless networks,''
\emph{IEEE/ACM Transaction on Networking,} vol. 14, no. 2, pp.
302-315, Apr. 2006.

\bibitem{Eryilmaz05}
A. Eryilmaz, R. Srikant, ``Fair resource allocation in wireless
networks using queue-length based scheduling and congestion
control,'' \emph{IEEE/ACM Transaction on Networking,} vol. 15, no.
6, pp. 1333-1344, Dec. 2007.

\bibitem{EryilmazJsac}
A. Eryilmaz, R. Srikant, ``Joint congestion control, routing and MAC
for stability and fairness in wireless networks,'' \emph{IEEE
Journal on Selected Areas in Communications,} vol. 24, no. 8, pp.
1514-1524, Aug. 2006.

\bibitem{Sasha}
A. Stolyar, ``Maximizing queueing network utility subject to
stability: greedy primal-dual algorithm,'' \emph{Queueing Systems,}
vol. 50, no.4, pp.401-457, Aug. 2005.

\bibitem{Neely05}
M. J. Neely, E. Modiano, C. Li, ``Fairness and optimal stochastic
control for heterogeneous networks,'' \emph{IEEE Transactions on
Information Theory,} vol. 52, no. 7, pp. 2915-2934, Jul. 2006.

\bibitem{infreq}
K. Kar, X. Luo, S. Sarkar, ``Throughput-optimal scheduling in
multichannel access point networks under infrequent channel
measurements,'' \emph{IEEE Transactions on Wireless Communications,}
vol. 7, no. 7, pp. 2619-2629, Jul. 2008,

\bibitem{Topology}
L. Ying, S. Shakkottai, ``Scheduling in mobile Ad Hoc networks with
topology and channel-State Uncertainty,'' \emph{Proceedings of IEEE
INFOCOM,} Rio de Janeiro, Brazil, Apr. 2009.

\bibitem{PartialSet}
A. Gopalan, C. Caramanis, S. Shakkottai,``On wireless scheduling
with partial channel-state information,'' \emph{Proceedings of
Allerton Conference on Communication, Control, and Computing,}
Monticello, IL, Sept. 2007.


\bibitem{Neely_CDC}
C. Li, M. J. Neely, ``Energy-optimal scheduling with dynamic channel
acquisition in wireless downlinks,'' \emph{Proceedings of IEEE
Conference on Decision and Control,} New Orleans, LA, Dec. 2007.

\bibitem{Uncertainty}
A. Pantelidou, A. Ephremides, A.L. Tits, ``Joint scheduling and
routing for Ad-hoc networks under channel state uncertainty,''
\emph{Proceedings of IEEE Intl. Symp. on Modeling and Optimization in Mobile, Ad
Hoc, and Wireless Networks (WiOpt),} Limassol, Cyprus, Apr. 2007.

\bibitem{Phil}
R. Aggarwal, P. Schniter, C. E. Koksal, ``Rate adaptation via
link-layer feedback for goodput maximization over a time-varying
channel,'' \emph{IEEE Transactions on Wireless Communications,} vol.
8, no. 8, pp. 4276-4285, Aug. 2009.

\bibitem{2stage}
M. J. Neely, ``Max weight learning algorithms with application to
scheduling in unknown environments,'' \emph{arXiv:0902.0630v1,} Feb.
2009.

\bibitem{Boyd}
S. Boyd, L. Vandenberghe, \emph{``Convex optimization,''} Cambridge
University Press, 2004.


\bibitem{Separation}
D. Bertsekas, \emph{``Nonlinear Programming,''} Belmont, MA: Athena Scientific, 1995.

\bibitem{Sean}
S. Meyn, \emph{``Control techniques for complex networks,''}
Cambridge University Press, 2007.

\bibitem{Billingsley}
P. Billingsley, \emph{``Probability and measure, 3rd edition,''}
Wiley, New York, 1995.

\bibitem{Gallager}
R. G. Gallager, \emph{``Discrete stochastic processes,''} Boston,
MA: Kluwer, 1996.

\bibitem{Tse}
D. Tse, P. Viswanath, \emph{``Fundamentals of wireless
communication,''} Cambridge University Press, 2005.

\bibitem{Junshan1}
D. Zheng, S. Pun, W. Ge, J. Zhang, H.V. Poor, ``Distributed
opportunistic scheduling for Ad-Hoc communications under noisy
channel estimation,'' \emph{Proceedings of IEEE ICC,} Beijing, China
May 2008.

\bibitem{Hassibi}
A. Vakili, M. Sharif, B. Hassibi, ``The effect of channel estimation
error on the throughput of broadcast channels,'' \emph{Proceedings of IEEE
Int. Conf. Acoust. Speech Signal Processing,} Toulouse, France, May
2006.

\bibitem{Hassibi2}
T. Kailath, A.H. Sayed, B. Hassibi, \emph{``Linear estimation,''}
Prentice-Hall, 2000.

\end{thebibliography}
\end{document}